\documentclass[dvipdfmx, 12pt]{article}

\usepackage[dvipdfmx]{graphicx,color}
\usepackage{bm}
\usepackage{amsthm}
\usepackage{mathrsfs}
\usepackage{amsmath}
\usepackage{amssymb}
\usepackage{amsfonts}
\usepackage{natbib}
\usepackage{algorithmic}
\usepackage{algorithm}
\usepackage{url}
\usepackage{multirow}

\usepackage{newtxtext,newtxmath}
\usepackage[top=30truemm,bottom=30truemm,left=25truemm,right=25truemm]{geometry}
\usepackage{here}
\usepackage{enumerate}
\usepackage{xcolor}

\newcommand{\field}[1]{\mathbb{#1}}

\newcommand{\N}{\mathbb{N}}

\newcommand{\dif}{\,\mathrm{d}}

\newcommand\mi{\mathrm{i}}

\newcommand{\R}{\field{R}}
\newcommand{\T}{\top}

\newcommand{\Z}{\field{Z}}

\newcommand{\abs}[1]{\lvert#1\rvert}

\newcommand{\inp}[2]{\langle#1,#2\rangle}
\newcommand{\norm}[1]{\lVert#1 \rVert}

\newcommand{\bX}{\bm{X}}
\newcommand{\bY}{\bm{Y}}

\newcommand{\bmu}{\bm{\mu}}
\newcommand{\bomega}{\bm{\omega}}
\newcommand{\bpsi}{\bm{\bm{\psi}}}

\newcommand{\uint}{\int^{1}_{0}}

\newtheorem{thm}{Theorem}[section]

\newtheorem{lem}[thm]{Lemma}

\theoremstyle{definition}

\newtheorem{asp}[thm]{Assumption}

\newtheorem{rem}[thm]{Remark}

\makeatletter

    \@addtoreset{equation}{section}
\makeatother

\title{An information criterion for detecting periodicities in functional time series} 
\date{June 6, 2026} 
\author{
Rinka Sagawa\\
{\small Department of Applied Mathematics}\\
{\small Waseda University}\\
{\small169-8555, Tokyo, Japan}
\and
Yan Liu\\
{\small Faculty of Science and Engineering}\\
{\small Waseda University}\\
{\small 169-8555, Tokyo, Japan}
\and
Valentin Patilea\\
{\small Centre de Recherche en \'{E}conomie et Statistique}\\
{\small  \'{E}cole Nationale de la Statistique et de l'Analyse de l'Information}\\
{\small Campus de Ker-Lann, rue Blaise Pascal, 35172 Bruz cedex, France}
}

\begin{document}
\maketitle

\begin{abstract}
We propose an information criterion for determining an unknown number of periodic components in functional time series. Identifying the number of frequencies in large-scale time series has been a central focus. To achieve this goal, we suggest an iterative procedure, utilizing the residual process obtained through least squares fitting. This iterative approach demonstrates broad applicability. We establish the consistency of the estimated number of periodic components 
by minimizing the information criterion. 
The efficacy of the procedure is illustrated through numerical simulations. 
In real data analysis, we apply this information criterion to temperature data and sunspot data.
\end{abstract}

\section{Introduction}

Functional data analysis has been a focal topic for enhancing predictive performance in complex data analysis.
This topic has been covered in several monographs; for example, \cite{bosq2000linear}, \cite{Ramsay2002} and \cite{kokoszka2017introduction},
just to name a few.
The practical application of functional data analysis spans a wide range of disciplines, including criminology, economics, archaeology, rheumatology, psychology, neurophysiology, auxology, meteorology and biomechanics.

Functional time series consist of 
functional observations indexed in time order.
More formally, a functional time series is a sequence of random functions 
$\{Y_t(u); u\in [0,1], t\in \mathbb{Z}\}$, where each $Y_t$ is a random element in $L^2([0,1])$.
For example, this type of data can be obtained by segmenting original data 
into smaller intervals.
Let us consider the daily average temperature data of Kyoto in Japan for $3$ years (See Figure~\ref{fig_fts}).
By dividing the original data into 3 smaller intervals of $365$ days each,
it reveals that there exist specific cycles in changes of the daily average temperature.
Even within the same dataset,
different lengths of intervals
may result in different observations of functional time series.

The statistical inference for functional time series
has been extensively explored so far (\cite{Hoermann2010, Hoermann2012}).
The prediction problem for functional autoregressive processes
has been considered by multivariate statistical techniques in \cite{aue2015prediction}.
Testing for periodicity using the asymptotic null distribution of the functional ANOVA statistics has been established by
\cite{hormann2018testing}.
The investigation has also been extended to the frequency domain, 
leading to the construction of spectral density operators for functional time series in a separable Hilbert space,
with applications to test for second-order stationarity 
(e.g., \cite{Delft2020}, \cite{van2020note}, \cite{aue2020testing}).

\begin{figure}[h]
\centering
\includegraphics[width=11cm, height=0.8\textheight, keepaspectratio]{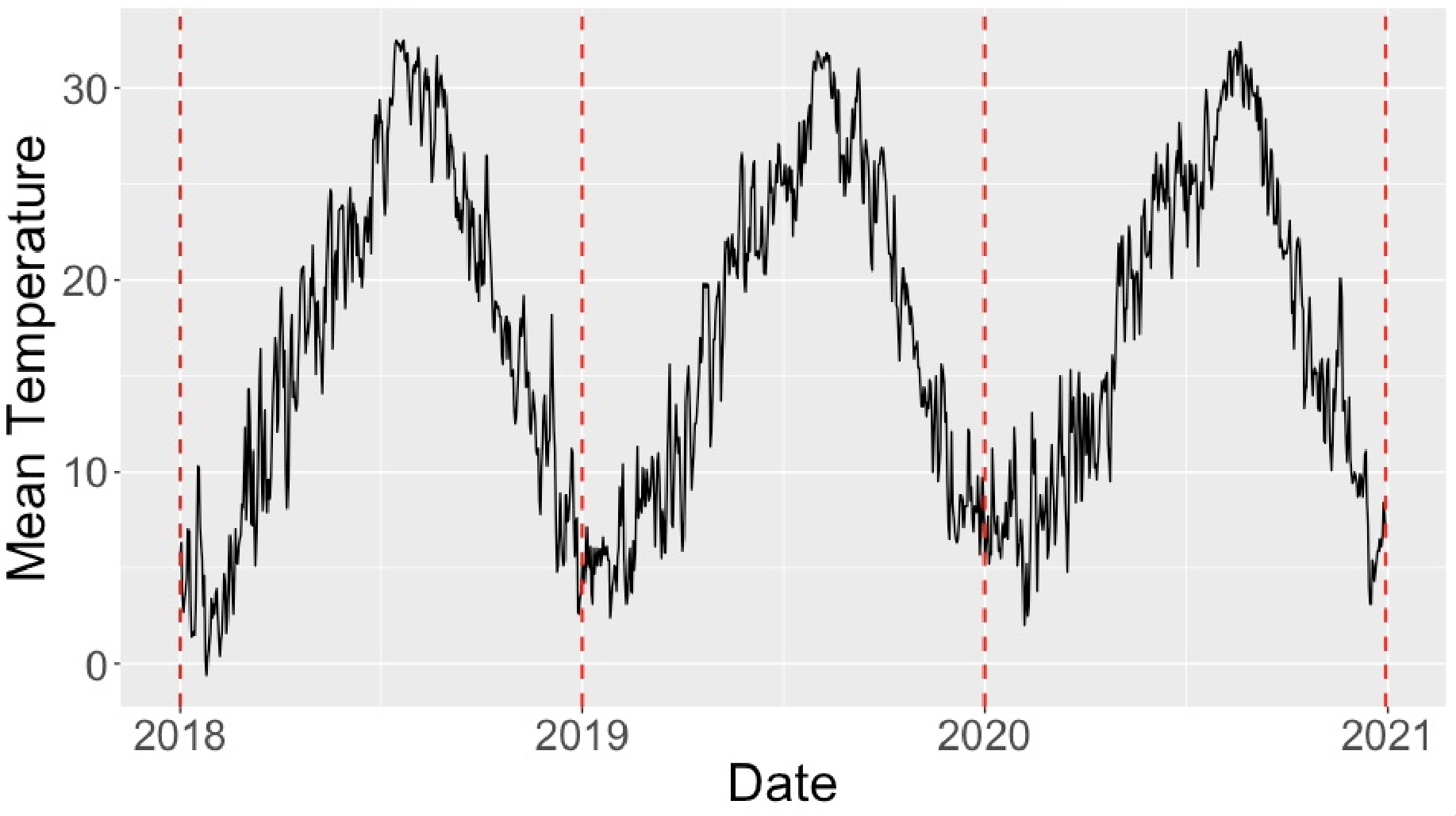}
\caption{
The daily average temperature data of Kyoto in Japan, from January 1, 2018 to December 25, 2020.
The dashed red lines indicate the segmentation of the data into 3 intervals of $365$ days each.
}
\label{fig_fts}
\end{figure}

In this paper, 
we consider a model of functional time series with trigonometric regression components.
An explicit expression of trigonometric functions in the model
provides a clear and interpretable representation of periodic structures, enabling consistent estimation of  periodic components.
Under this setting, we propose an information criterion for detecting periodicities in this model.
Determining the number of periodic components by a simple information criterion
distinguishes our approach from existing literature.
Our method employs a BIC-type model selection criterion,
which has been introduced by \cite{akaike1977entropy} and \cite{schwarz1978estimating}.
We suggest applying the information criterion to the empirical functional principal components
of the functional time series.
It is shown that the parameter of the trigonometric regressors within the functional time series 
converges in probability to the true parameter. 
We also establish the consistency of determining the number of periodicities by minimizing the information criterion.
The numerical simulations illustrate that
the selection of the true model
is not sensitive to the choice of the penalty factor 
included in the proposed criterion.
It should be remarked that this approach is different from the order selection for autoregressive models.
This new selection criterion is applied to temperature data and sunspot data in our real data analyses.

A comprehensive discussion on time series analysis has been structured in \cite{Brockwell1991}, \cite{Taniguchi2000} and \cite{Shumway2000}.
In a general framework, the regression model
for time series analysis has been thoroughly considered.
The statistical inference for multiple periodicities
was considered in \cite{hannan1973estimation}.
The model selection techniques for a single time series were considered in
\cite{quinn1989estimating}, \cite{wang1993aic} and \cite{kavalieris1994determining}.
The analysis of sunspot data by time series method
was considered in \cite{Kunsch1989}.
A nonparametric estimation method was proposed in \cite{Vogt2014}
to find out the anomalies in yearly global temperature.
\cite{patilea2016testing} considered the goodness-of-fit for a regression model with a functional response.
\cite{proietti2023seasonality} considered high-frequency time series
to model the seasonality in time series data.
Based on the previous studies,
we propose a BIC-type information criterion to determine the number of periodicities in functional time series.
Our proposal works well in real data analyses for temperature data and sunspot data.

The contributions of this paper can be summarized in the following three points.
First, the number of periodicities in functional time series 
can be automatically determined by our proposed information criterion,
which avoids the multiple testing issue.
Second, the consistency of our proposed procedure was shown theoretically,
which guarantees the detection of multiple frequencies in functional time series.
Finally, unlike the above literature to consider an information criterion for a single time series,
the procedure can be regarded as one for multiple time series obtained from functional time series.
In other words, this allows us to determine the number of periodicities for both the multivariate time series and the functional time series.

The remainder of the paper is organized as follows. 
In Section $\ref{sec2}$, we describe the parameter estimation of the regressors
and periodicities in the regression model for functional time series.
The estimated parameter vector is shown to converge 
to the true one in probability. 
In Section $\ref{sec3}$, an information criterion for detecting the number of periodicities is proposed. 
We express the procedure to determine the number of periodicities based on the information criterion in an algorithmic way.
In Section $\ref{sec4}$, 
numerical simulations reveal that
the performance of our selection procedure
for the number of periodicities
is insensitive to 
the choice of the penalty factor 
included in the proposed criterion.
In Section \ref{sec5}, 
we conduct real data analyses to determine the number of periodicities in both temperature data and sunspot data.
Section \ref{sec:6} concludes the paper.
The proof of theorem is presented in Appendix.
The proofs of technical results
and complete results of simulations and data analysis are relegated to the Supplementary Material.

\section{Trigonometric regression models}
\label{sec2}
In this section, 
let us consider the trigonometric regression model of order $r_{0}$ ($r_{0}$ is provisionally known) with functional time series.
Suppose $\{X_t; \, t\in \Z\}$
is a zero-mean stationary time series of functions in $\mathcal{H}:= L^2([0,1])$,  
which is a space of square integrable functions $g:[0,1] \to \R$, equipped with the inner product
\[
\inp{g_1}{g_2} = \uint g_1(u) g_2(u) \dif u, \qquad g_1, g_2 \in \mathcal{H},
\]
and the corresponding norm $\norm{\, \cdot \,}_{\mathcal{H}}$.
By definition, the covariance operator $\Gamma_0$ of the process is
\[
\Gamma_0(\,\cdot\,) = \mathbb{E}[\inp{X_t}{\, \cdot \, } X_t].
\]
In addition, we introduce the cross-covariance operator between $X_0$ and $X_t$ as
\[
\Gamma_{t}(\,\cdot\,) = \mathbb{E}[\inp{X_0}{\, \cdot \, } X_t],
\]
which coincides with $\Gamma_0$ when $t = 0$.

A trigonometric regression model with functional time series is
\begin{equation}\label{FTS_model}
Y_t(u) = \mu(u) +\left(\sum_{k=1}^{r_{0}}
\bigl(\alpha_k \cos(t\theta_k)+\beta_k \sin(t\theta_k) \bigr)\right)\omega (u)+ X_t(u), \qquad u \in[0,1],
\end{equation}
where $\mu$ and $\omega$ are unknown functions in $\mathcal{H}$
with
$\int_0^1 \omega^2(u)du=\norm{\omega(u)}_{{\mathcal{H}}}^2=1$.
For each $i = 1, \ldots, r_0$,
the parameters
$\alpha_{i}$, 
$\beta_{i}$ ($\alpha_i \not =0$ or $\beta_i \not = 0$), 
$\theta_{i}$ ($\in (0, \pi)$) are unknown;
$\theta_i \not = \theta_j$ if $i \not = j$;
if $r > r_0$, then $\alpha_r = \beta_r = 0$. 
The trigonometric regression model \eqref{FTS_model}
is the version of (2.5) of the model (M.2) in \cite{hormann2018testing}.


Let $(\nu_{\ell};\, \ell \in \N)$ be the orthonormal basis for $\mathcal{H}$
obtained through the functional principal component analysis (FPCA). 
With this FPCA basis, each $X_t$ can be represented using the Karhunen-Lo{\'e}ve representation
%
\[
X_t = \sum_{\ell = 1}^{\infty} \inp{X_t}{\nu_{\ell}} \nu_{\ell}.
\]
For a fixed constant $0 < p \in \N$, 
the functional principal component scores are
\begin{align*}
\bY_t &\equiv(\inp{Y_t}{\nu_1}, \inp{Y_t}{\nu_2}, \cdots, \inp{Y_t}{\nu_p})^{\T},\\
\bmu &\equiv(\inp{\mu}{\nu_1}, \inp{\mu}{\nu_2}, \cdots, \inp{\mu}{\nu_p})^{\T},\\
\bomega &\equiv (\inp{\omega}{\nu_1},\inp{\omega}{\nu_2}, \cdots, \inp{\omega}{\nu_p})^{\T},\\
\text{and} \qquad
\bX_t &\equiv (\inp{X_t}{\nu_1}, \inp{X_t}{\nu_2}, \cdots, \inp{X_t}{\nu_p})^{\T}.
\end{align*}
We impose the following assumptions for the identifiability of the model \eqref{FTS_model}.

\begin{asp}[Identifiability] \label{asp:iden}
\begin{enumerate}[(i)]
\item The parameters $\alpha_k$, $\beta_k$, $k = 1, \ldots, r_0$, and the function $\omega(u)$ are independent of $t$.
\item For the function $\omega(u)$, $\inp{\omega}{\nu_j}\neq 0$ for some $j \in \{1,\ldots, p\}$.
\end{enumerate}
\end{asp}
As a direct implication of Assumption \ref{asp:iden},
the vector $\bm{\omega}$ is non-zero, revealing the identifiability of the periodic components.
Assumption \ref{asp:iden} also indicates a 
guideline for deciding the dimension parameter $p$ in practice.
Theoretically, the larger the dimension $p$ is, the better 
the approximation performance of the function is.
Based on the view of detecting the number of periodicities,
however, 
the value of $p$ could be moderate when
Assumption \ref{asp:iden}(ii) is satisfied with
large coefficient parameters $\alpha_j$ and $\beta_j$.

This leads us to rewrite the model  \eqref{FTS_model} as
\begin{equation*}
\label{new_FTS_model}
\bY_t =\bmu + \left(\sum_{k=1}^{r_{0}}(\alpha_k \cos (t\theta_k)+\beta_k \sin (t\theta_k))\right) \bomega+\bX_t.
\end{equation*}
By construction, the cross-covariance matrix of $\{\bX_t\}$ is the $p \times p$-matrix 
\[
\Xi_{t}\equiv \bigl( \inp{\Gamma_{ t}(\nu_i)}{\nu_j}\bigr)_{i, j = 1, \ldots, p}.
\]

\begin{rem}
In practice, the orthonormal basis $(\nu_{\ell};\, \ell \in \N)$ 
is unknown in advance.
A practical approach is to use the empirical functional principal components instead of the true ones.
The basis is usually obtained from the observed stretch as follows.

Suppose now that we have observed $X_1, \ldots, X_N \in \mathcal{H}$.
The functional mean $\hat{\mu}_X$ is
$\hat{\mu}_X=\frac{1}{N}\sum_{t=1}^NX_t$, 
and the covariance operator is 
\[
\hat{\Gamma}_0(\cdot)=\frac{1}{N}\sum_{t=1}^N \inp{X_t-\hat{\mu}_X}{\cdot}(X_t-\hat{\mu}_X).
\]
 \cite{Hoermann2010} proved that  these estimators have $\sqrt{N}$-consistency 
 under the weak dependence assumption (e.g. $L^4$--$m$--approximability). 
 Here, $L^4$-$m$-approximability means that 
 the process $\{X_t\}$ can be approximated by a sequence $\{X_t^{(m)}\}$
 obtained by replacing innovations beyond lag $m$ with independent copies,
 such that the approximation error $\lVert X_t - X_t^{(m)}\rVert_4$
 decays sufficiently fast as $m \to \infty$.
From $\hat{\Gamma}_0(\cdot)$, for an arbitrary fixed but typically small $p<N$, 
the estimated eigenfunctions $\hat{\nu}_1,\ldots, \hat{\nu}_p$ can be computed,
and correspondingly, 
$\inp{Y_t}{\hat{\nu}_{\ell}}$, $\ell=1,\ldots, p$, are the empirical functional principal component scores.
\end{rem}

\begin{rem}
Our approach can also be considered with other bases of functions.
Let $\{\nu_i(u), i=1,\ldots, p\}$ be a class of basis functions, e.g., Fourier bases or B-spline bases.
Then the functional data $X_t(u)$ is approximated by
the following approximation:
\[
(\hat{\gamma}_1, \ldots, \hat{\gamma}_p)
= \arg \min_{\bm{\gamma}}\norm{X_t - \sum_{i=1}^p \gamma_i \nu_i}^2.
\]
See \cite{ramsay2006functional} for details.
\end{rem}

Let us use $\bm{\alpha}_k \equiv \alpha_k \bm{w}$ and $\bm{\beta}_k\equiv \beta_k \bm{w}$ for $k = 1, \ldots, r_{0}$,
where clearly $\bm{\alpha}_k \not = \bm{0}$ and $\bm{\beta}_k \not = \bm{0}$ for each $k$.
We arrive at the following trigonometric regression model
\begin{equation}\label{new_FTS_model}
\bY_t =\bmu + \sum_{k=1}^{r_{0}}(\cos (t\theta_k) \bm{\alpha}_k + \sin (t\theta_k) \bm{\beta}_k) +\bX_t.
\end{equation}
To keep the brevity, let $\bm{\psi}(r)$ be the vector of unknown parameters, i.e.,
\[
\bm{\psi}(r)\equiv
(
\bmu^{\T}, 
\bm{\alpha}_1^{\T}, \bm{\beta}_1^{\T}, \bm{\alpha}_2^{\T}, \bm{\beta}_2^{\T}, \ldots, \bm{\alpha}_{r}^{\T}, \bm{\beta}_{r}^{\T})^{\T}
\in \mathbb{R}^{(2r + 1 )p\times 1},\qquad r=0,\cdots,r_0.
\]
Accordingly, let $\bm{q}_t(r)$ be the vector of trigonometric functions, i.e.,
\[
\bm{q}_t(r)  = 
\bigl(1, \cos (t\theta_1), \sin (t\theta_1), \ldots, \cos (t\theta_{r}), \sin (t\theta_{r})
\bigr)^{\T},\qquad r=0,\cdots,r_0.
\]
Without any confusion, let $\bm{\psi}  = \bm{\psi}(r_0)$ and $\bm{q}_t  = \bm{q}_t(r_0)$.
The model \eqref{new_FTS_model} is 
now simplified in the following vector form:
\begin{equation}\label{new_FTS_model2_sec}
\bY_t = \bm{Q}_t(r_0) \bpsi + \bX_t,
\end{equation}
where 
$\bm{Q}_t(r)=(\bm{q}_t(r)^{\T} \otimes \bm{E}_p)\in \mathbb{R}^{p\times (2r + 1)p}$, $r=1,\ldots, r_0$, and $\bm{E}_p$ is the $p$-dimensional identity matrix.

Denote now the observed stretch of empirical functional principal components \eqref{new_FTS_model2_sec} by
$\bY_1, \ldots, \bY_N$.
Let $\bY$, $\bX$, and $\bm{Q}(r)$ be the matrices
$\bY=(\bY_1^{\T}, \ldots, \bY_N^{\T})^{\T}$, 
$\bX=(\bX_1^{\T}, \ldots, \bX_N^{\T})^{\T}$, 
and $\bm{Q}(r) = \bigl(\bm{Q}_1(r)^{\T},\ldots,\bm{Q}_N(r)^{\T} \bigr)^{\T} $
$\in \mathbb{R}^{Np \times (2r + 1)p}$,
$r=1,\ldots,r_0$,
respectively.
With this notation, the equation \eqref{new_FTS_model2_sec} can be rewritten under the vector for
\begin{equation} \label{eq:vector_mod_sec}
\bY = \bm{Q}(r_0) \bm{\psi} +\bX.
\end{equation}

Let $\hat{\bpsi}(r)$ be the least squares estimates of $\bpsi$ in \eqref{eq:vector_mod_sec} as
\begin{equation*} \label{eq:lse_psi}
\hat{\bpsi}(r) = 
\bigl(\bm{Q}(r)^{\T} 
\bm{Q}(r)\bigr)^{-1} \bm{Q}(r)^{\T} \bY,
\end{equation*}
and let $\hat{\bpsi} = \hat{\bpsi} (r_0)$.
Let $\Sigma\in \mathbb{R}^{Np\times Np}$ be the covariance matrix of $\bX$.
Then we have
\begin{equation*} \label{eq:noise_var}
\Sigma =
\begin{pmatrix}
\Xi_0 & \Xi_1 & \cdots & \Xi_{N - 1} \\
\Xi_1 & \Xi_0 & \cdots & \Xi_{N - 2} \\
\vdots & \vdots & \ddots & \vdots \\
\Xi_{N - 1} & \Xi_{N - 2} & \cdots & \Xi_0
\end{pmatrix}.
\end{equation*}
We impose the following dependence assumption
for the functional time series $X_t(u)$.
\begin{asp}[Cumulant kernel of order $k$]
\label{k_cum_ass}
Let $\mathrm{cum}_{t_1,\ldots,t_{k-1}}:\mathcal{H}^k\to \mathbb{R}$ be 
\[
\mathrm{cum}_{t_1,\ldots,t_{k-1}}(g_1,\ldots,g_k)
=\mathrm{cum}(\inp{X_0}{g_1}, \inp{X_{t_1}}{g_2},\ldots, \inp{X_{t_{k-1}}}{g_k})
\]
for $g_1,\ldots,g_k\in \mathcal{H}$. 
Here, $\rm{cum}$ denotes the joint cumulant of the random variables involved.
The series $\sum_{k=1}^{\infty}\mathcal{C}_kz^k/k!$ is convergent 
for $z$ in a neighborhood of $0$, where $\mathcal{C}_k$ is defined as
\[
\mathcal{C}_k
:=\sup_{g_1,\ldots, g_k \in \{\nu_1, \dots, \nu_p\} }\sum_{t_1,\ldots, t_{k - 1} } 
\abs{\mathrm{cum}_{t_1,\cdots,t_{k-1}}(g_1,\ldots, g_k)}.
\]
\end{asp}

Assumption \ref{k_cum_ass} is an extension of a dependence condition for multivariate time series,
which has been considered in \citet[Assumption 2.6.3]{brillinger2001time}.
It controls the dependence structure of the functional time series through the summability of higher-order cumulant kernels. 
This condition allows us to obtain almost sure bounds for various statistics of interest, 
which play a key role in establishing the consistency of the proposed information criterion.

\begin{rem}
Assumption \ref{k_cum_ass} characterizes the temporal dependence structure through higher-order cumulant kernels, and enables sharp theoretical results. In contrast, a commonly used notion of weak dependence in functional time series is
$L^{p}$-$m$-approximability, which is based on 
approximating the process by sequences obtained by replacing remote innovations with independent copies, so that the approximation error decays.
These two types of conditions are not directly comparable in general, as they capture different aspects of dependence. 
Under suitable regularity conditions, 
weak dependence structures ensuring good approximation properties are often associated with sufficiently fast decay of higher-order cumulants.
\end{rem}

\begin{rem}
In Assumption \ref{k_cum_ass}, the supremum is taken over the orthonormal basis $\{v_1, \ldots, v_p\}$, 
reflecting the functional time series data analysis in practice.
The truncation level $p$ determines the accuracy of this approximation and is therefore part of the modeling choice.
In special cases such as Gaussian functional time series, 
higher-order cumulants vanish, 
and the condition is automatically satisfied beyond second order. 
For more general processes, the choice of $p$
affects the behavior of higher-order cumulants through the finite-dimensional representation.\\
In practice, there is a trade-off between statistical accuracy and computational cost. 
Larger values of $p$ 
typically lead to more accurate representations, as also illustrated in our numerical results, while smaller values of $p$
result in lower computational burden.
\end{rem}

The consistency of the least squares estimates $\hat{\bpsi}(r)$ 
is shown in the following lemma.
\begin{lem}
\label{para_as_converge}
Suppose $\{X_t; \, t \in \Z\}$ is a zero-mean stationary process satisfying Assumption \ref{k_cum_ass}. 
Under Assumption \ref{asp:iden},
if $0 \leq r \leq r_0$,
then the least squares estimates $\hat{\bpsi}(r)$
converges to the true vector $\bpsi(r)$ in probability;
if $r > r_0$,
then the $k$th element of $\hat{\bpsi}(r)$, $k > (2r_0 + 1)p$, 
converges to 0 in probability,
as $N \to \infty$.
Especially, 
$\hat{\bpsi}$ converges to $\bpsi$ in probability.
\end{lem}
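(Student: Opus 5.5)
We treat the frequencies $\theta_1,\theta_2,\ldots$ as known regressors, since $\bm{Q}(r)$ is built from them, and use the Kronecker structure to reduce everything to scalar trigonometric sums. Since $\bm{Q}_t(r)=\bm{q}_t(r)^{\T}\otimes\bm{E}_p$, we have
\[
\bm{Q}(r)^{\T}\bm{Q}(r)=\Bigl(\sum_{t=1}^N\bm{q}_t(r)\bm{q}_t(r)^{\T}\Bigr)\otimes\bm{E}_p ,
\qquad
\bm{Q}(r)^{\T}\bX=\sum_{t=1}^N\bm{q}_t(r)\otimes\bX_t .
\]
For distinct frequencies in $(0,\pi)$ the elementary identities $\sum_{t=1}^N\cos(t\lambda)=O(1)$ for $\lambda\not\equiv0 \pmod{2\pi}$ and $\sum_t\cos^2(t\theta)=N/2+O(1)$ give
\[
N^{-1}\sum_{t=1}^N\bm{q}_t(r)\bm{q}_t(r)^{\T}\to D(r)=\mathrm{diag}(1,\tfrac12,\ldots,\tfrac12).
\]
This limit is positive definite, so $(N^{-1}\bm{Q}(r)^{\T}\bm{Q}(r))^{-1}\to D(r)^{-1}\otimes\bm{E}_p$, and in particular the inverse exists for large $N$. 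For $r>r_0$ we index the additional frequencies $\theta_{r_0+1},\dots,\theta_r$ as distinct points of $(0,\pi)$ with true coefficients zero, as in the model statement.

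\emph{Case $0\le r\le r_0$.} Here we write $\bY=\bm{Q}(r)\bpsi(r)+\bm{Q}^{c}\bpsi^{c}+\bX$, where $\bm{Q}^{c}$ collects the columns of frequencies $\theta_{r+1},\ldots,\theta_{r_0}$. Then
\[
\hat{\bpsi}(r)-\bpsi(r)=\bigl(N^{-1}\bm{Q}(r)^{\T}\bm{Q}(r)\bigr)^{-1}\Bigl(N^{-1}\bm{Q}(r)^{\T}\bm{Q}^{c}\bpsi^{c}+N^{-1}\bm{Q}(r)^{\T}\bX\Bigr).
\]
The cross term $N^{-1}\bm{Q}(r)^{\T}\bm{Q}^{c}$ consists of the off-diagonal trigonometric sums over distinct frequencies divided by $N$, so it is $O(N^{-1})$. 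This orthogonality is exactly what makes the omitted components harmless. For $r>r_0$ the same decomposition holds with no omitted columns, but with true coefficient zero in the rows beyond $(2r_0+1)p$. Hence those entries of $\hat{\bpsi}(r)$ converge to the corresponding entries of the noise term, and the final claim $\hat{\bpsi}\to\bpsi$ is the case $r=r_0$. It therefore remains in all cases to show $N^{-1}\sum_t\bm{q}_t(r)\otimes\bX_t\to\bm 0$ in probability.

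\emph{Main step: the noise term.} Each entry has the form $N^{-1}\sum_t c_t\inp{X_t}{\nu_j}$ with $|c_t|\le1$ (here $c_t$ is $1$, $\cos(t\theta_k)$ or $\sin(t\theta_k)$). Its variance is
\[
N^{-2}\sum_{s,t}c_sc_t\inp{\Gamma_{t-s}(\nu_j)}{\nu_j}\le N^{-1}\sum_{h\in\Z}\abs{\mathrm{cum}_h(\nu_j,\nu_j)} \le N^{-1}\mathcal{C}_2 ,
\]
and $\mathcal{C}_2<\infty$ by Assumption~\ref{k_cum_ass}, since convergence of the series forces $\mathcal{C}_2<\infty$. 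Chebyshev's inequality then gives $O_p(N^{-1/2})$, and hence convergence to zero in probability. The one point needing care is that the second-order cumulant summability in Assumption~\ref{k_cum_ass} is stated for $g_1,g_2$ in the FPCA basis, which is exactly what the variance bound uses, so the higher-order cumulants are not needed. Assumption~\ref{asp:iden} enters only to ensure that the target $\bpsi$ is well-defined and nonzero; it does not affect the convergence argument.
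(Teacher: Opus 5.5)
Your proposal is correct and follows essentially the same route as the paper. The Kronecker reduction gives $N^{-1}\bm{Q}(r)^{\T}\bm{Q}(r)\to D(r)\otimes\bm{E}_p$ by orthogonality of the trigonometric regressors, the omitted-frequency cross term is $O(N^{-1})$, and $N^{-1}\bm{Q}(r)^{\T}\bX$ is killed by a Chebyshev bound that uses only absolute summability of the autocovariances of $\inp{X_t}{\nu_j}$ (so $\mathcal{C}_2<\infty$ suffices), which is exactly the condition the paper singles out in Remark~\ref{rem:rem2}(i).
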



\begin{rem}
\label{rem:rem2}
\begin{enumerate}[(i)]
\item
Denote the $i$th element of $\bm{X}_t$ by $X^{(i)}_t$.
The only condition required to guarantee the consistency 
is the absolutely summable autocovariance sequence of each element $X^{(i)}_t$, $i = 1, \dots, p$.
It has been shown in Lemma 4.1 in \cite{Hoermann2010}
that the $L^2$--$m$-approximable sequence has this property.
\item
The mean vector $\bm{\mu}$ can be estimated consistently.
\cite{Hoermann2010} has already shown that $\mathbb{E}[\norm{\bar{\bY} - \bm{\mu}}^2] = O(N^{-1})$,
where $\bar{\bY}= N^{-1}\sum_{t=1}^N \bY_t$.
In addition, 
according to Lemma \ref{para_as_converge},
if the model order $r = r_0$ is specified,
then the estimator $\hat{\bpsi}$ converges to the true  parameter $\bpsi$ of trigonometric functions.
\end{enumerate}
\end{rem}

Let us move to the estimation problem of the periodicity parameter $\theta_k$, $k = 1, \ldots, r_0$.
To keep the brevity of the presentation, 
let $Y_{t}^{(j)}$ denote the $j$th element of the observed stretch $\bY_t$. 
Let $p_{\bY, N}$ be
\begin{equation*}
p_{\bY, N}(\theta)=\sum_{j=1}^p\left|N^{-1}\sum_{t=1}^N(Y_{t}^{(j)}-\bar{Y}^{(j)})e^{\mi t\theta}\right|^2.
\end{equation*}
In addition, 
let $\alpha_{kj}$ and $\beta_{kj}$ be the $j$th element of vectors $\bm{\alpha}_k$ and $\bm{\beta}_k$,
respectively.
It is easy to see that $\alpha_{kj}\cos (t\theta_k)+\beta_{kj}\sin (t\theta_k)$ 
represents the $j$th element of the second term in \eqref{new_FTS_model}.

\begin{lem} \label{p_y_approx}
Suppose $\{X_t; \, t \in \Z\}$ is a zero-mean stationary process satisfying Assumption \ref{k_cum_ass}. 
Under Assumption \ref{asp:iden}, it holds that
$p_{\bY,N}(\theta)=\sum_{j=1}^p|T_{j, N}(\theta)|^2+o_p(1)$ uniformly in $\theta \in (0, \pi)$,
where
\[
T_{j, N}(\theta)=N^{-1}\sum_{t=1}^N\left\{\sum_{k=1}^{r_{0}}[\alpha_{kj}\cos (t\theta_k)+\beta_{kj}\sin (t\theta_k)]e^{\mi t\theta}\right\}.
\]
Additionally, as $N \to \infty$, it holds that
\[
\sum_{j=1}^p\abs{T_{j, N}(\theta)}^2\to
\begin{cases}
0, & \text{if $\theta\not \in \{\theta_1, \dots, \theta_{r_0}\}$}, \\
\sum_{j=1}^p(\alpha_{rj}^2+\beta_{rj}^2)/4, &
\text{if $\theta = \theta_r \in \{\theta_1, \dots, \theta_{r_0}\}$}.
\end{cases}
\]
Thus, we obtain
\[
p_{\bY, N}(\theta) \to_p 
\begin{cases}
0, & \text{if $\theta\not \in \{\theta_1, \dots, \theta_{r_0}\}$}, \\
\sum_{j=1}^p(\alpha_{rj}^2+\beta_{rj}^2)/4, &
\text{if $\theta = \theta_r \in \{\theta_1, \dots, \theta_{r_0}\}$}.
\end{cases}
\]
\end{lem}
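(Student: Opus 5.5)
We decompose each coordinate's centred finite Fourier transform into a deterministic part, a noise part and a mean-correction part, and show that only the deterministic part survives uniformly in $\theta$. Write $s^{(j)}_t=\sum_{k=1}^{r_0}[\alpha_{kj}\cos(t\theta_k)+\beta_{kj}\sin(t\theta_k)]$. Then $Y^{(j)}_t-\bar Y^{(j)}=s^{(j)}_t-\bar s^{(j)}+X^{(j)}_t-\bar X^{(j)}$, where $\bar s^{(j)}$ and $\bar X^{(j)}$ are sample means. Hence
\[
N^{-1}\sum_{t=1}^N(Y^{(j)}_t-\bar Y^{(j)})e^{\mi t\theta}=T_{j,N}(\theta)+D_{j,N}(\theta)-(\bar s^{(j)}+\bar X^{(j)})\,N^{-1}\sum_{t=1}^Ne^{\mi t\theta},
\]
with $D_{j,N}(\theta)=N^{-1}\sum_{t=1}^N X^{(j)}_te^{\mi t\theta}$.

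Since $\theta_k\in(0,\pi)$, we have $\bar s^{(j)}=O(N^{-1})$. Also $|N^{-1}\sum_t e^{\mi t\theta}|\le 1$, and $\bar X^{(j)}=O_p(N^{-1/2})$ because the autocovariances are absolutely summable by Assumption~\ref{k_cum_ass} with $k=2$. So the last term is $o_p(1)$ uniformly in $\theta$.

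The main step is the uniform bound $\sup_{\theta\in(0,\pi)}|D_{j,N}(\theta)|=o_p(1)$; pointwise bounds are not enough.
\begin{itemize}
\item[] First attempt: bound the supremum of the periodogram. Under the cumulant summability of Assumption~\ref{k_cum_ass} (all orders, Brillinger-type), the classical result gives $\sup_\theta|\sum_{t=1}^N X^{(j)}_te^{\mi t\theta}|=O_p(\sqrt{N\log N})$. Dividing by $N$ yields $O_p(\sqrt{\log N/N})=o_p(1)$.
\item[] Fallback if that result is unavailable in the needed form: evaluate on a grid $\theta_m=\pi m/N^2$ and use moment bounds of high order from the cumulant condition with a union bound. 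Between grid points, use the Lipschitz bound $|D_{j,N}(\theta)-D_{j,N}(\theta')|\le |\theta-\theta'|\,N^{-1}\sum_t t|X^{(j)}_t|=O_p(N|\theta-\theta'|)$, which is $O_p(N^{-1})$ at this grid spacing.
\end{itemize}

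Given these bounds, $|T_{j,N}(\theta)|$ is bounded uniformly by $\sum_k(|\alpha_{kj}|+|\beta_{kj}|)$. Expanding $|a+b|^2-|a|^2=2\,\mathrm{Re}(\bar a b)+|b|^2$ with $a=T_{j,N}$ and $b$ the $o_p(1)$ remainder, and summing over the finitely many $j\le p$, gives $p_{\bY,N}(\theta)=\sum_j|T_{j,N}(\theta)|^2+o_p(1)$ uniformly.

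For the deterministic limit, write $\cos(t\theta_k)=(e^{\mi t\theta_k}+e^{-\mi t\theta_k})/2$ and $\sin(t\theta_k)=(e^{\mi t\theta_k}-e^{-\mi t\theta_k})/(2\mi)$. Then $T_{j,N}(\theta)$ is a finite combination of terms $N^{-1}\sum_t e^{\mi t(\theta\pm\theta_k)}$.
\begin{itemize}
\item[] Such a term tends to $0$ unless $\theta\pm\theta_k\equiv0 \pmod{2\pi}$, by the geometric-sum bound $|N^{-1}\sum_t e^{\mi t\lambda}|\le 2/(N|1-e^{\mi\lambda}|)$.
\item[] Because $\theta,\theta_k\in(0,\pi)$ and the $\theta_k$ are distinct, only $\theta=\theta_r$ with the minus sign gives a nonvanishing term. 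It contributes $\tfrac12\alpha_{rj}-\tfrac{1}{2\mi}\beta_{rj}=\tfrac12(\alpha_{rj}+\mi\beta_{rj})$, whose squared modulus is $(\alpha_{rj}^2+\beta_{rj}^2)/4$.
\end{itemize}
Summing over $j$ gives the stated limit of $\sum_j|T_{j,N}(\theta)|^2$. Combining it with the uniform approximation yields the stated limit of $p_{\bY,N}(\theta)$ in probability.
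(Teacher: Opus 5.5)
Your proposal is correct and follows the route the paper's setup points to. You split the centred Fourier transform into the signal part $T_{j,N}(\theta)$, the noise transform $N^{-1}\sum_{t}X_t^{(j)}e^{\mi t\theta}$ and an $O_p(N^{-1/2})$ mean correction; you make the noise term vanish uniformly in $\theta$ via the Brillinger-type bound $\sup_\theta\bigl|\sum_{t=1}^N X_t^{(j)}e^{\mi t\theta}\bigr|=O\bigl((N\log N)^{1/2}\bigr)$, which Assumption~\ref{k_cum_ass} (Brillinger's condition, imported for such almost sure bounds) provides; and you then evaluate $T_{j,N}$ by geometric sums, where at $\theta=\theta_r$ only the $e^{\mi t(\theta-\theta_r)}$ term survives, with coefficient $(\alpha_{rj}+\mi\beta_{rj})/2$. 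The paper's own proof of this lemma is deferred to its Supplementary Material and is not in the source given here, so this is a comparison with the argument its assumptions are set up for rather than with a visible proof; your grid-plus-Lipschitz fallback is also sound, since sixth moments already make the union bound over the $N^2$ grid points vanish, so the uniformity step is secured either way.
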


According to Lemma \ref{p_y_approx}, 
if the periodic parameter $\theta$ appears in the model, then
the statistic $p_{\bm{Y}, N}(\theta)$ converges in probability to 
a linear combination of squared coefficient elements in the vectors $\bm{\alpha}_r$ and $\bm{\beta}_r$.
As a result, the periodicity can be estimated through the asymptotic limit of $p_{\bm{Y}, N}(\theta)$.
The true periodicity parameter $\theta_k$, $k = 1, \dots, r_0$, 
 can be recursively estimated by the following procedure.
Let $\bm{\psi}_{r_k}$, $\bm{q}_{t, r_k}$ and $\bm{Q}_{t, r_k}$ be
\begin{align*}
 \bm{\psi}_{r_k} &=
(
\bmu^{\T}, 
\bm{\alpha}_{r_1}^{\T}, \bm{\beta}_{r_1}^{\T}, \bm{\alpha}_{r_2}^{\T}, \bm{\beta}_{r_2}^{\T}, \ldots, \bm{\alpha}_{r_k}^{\T}, \bm{\beta}_{r_k}^{\T})^{\T}
\in \mathbb{R}^{(2k + 1 )p\times 1}, \\
\bm{q}_{t, r_k}  &= 
\bigl(1, \cos (t\theta_{r_1}), \sin (t\theta_{r_1}), \ldots, 
\cos (t\theta_{r_k}), \sin (t\theta_{r_k})
\bigr)^{\T}, \\
\text{and} \qquad
\bm{Q}_{t, r_k} &= (\bm{q}_{t, r_k}^{\T} \otimes \bm{E}_p) 
\in \mathbb{R}^{p\times (2k + 1)p}.
\end{align*}
The estimates of $\theta_{r_k}$ is obtained by
\begin{align}
\label{theta_estimate_eq}
\hat{\theta}_{r_k}&=
\arg \max_{\theta \in \Theta} 
p_{\bY,N}^{(k-1)}(\theta)\\ \notag
&=\arg \max_{\theta \in \Theta} \sum_{j=1}^p\left|N^{-1}\sum_{t=1}^N
\left(Y_{t}^{(j)}-
\sum_{l=1}^{k-1}\Bigl(\bm{Q}_{t, r_l} \hat{\bpsi}_{r_l} \Bigr)^{(j)}
\right)e^{\mi t\theta}\right|^2,
\quad k = 1, 2, \ldots, 
\end{align}
where 
$p_{\bY,N}^{(0)}(\theta): = p_{\bY, N}(\theta)$ and
$\Theta:= \{2\pi j/N;\, j = 1, \ldots, \lceil N/2 \rceil\}$.
The consistency of the estimator $\hat{\theta}_{r_k}$
is established in the following lemma.

\begin{lem}
\label{theta_estimate}
Suppose $r_k \in \{1, 2, \ldots, r_0\}$.
With the same assumption in Lemma \ref{p_y_approx},
it holds that 
$N(\hat{\theta}_{r_k} - \theta_{r_k})$ converges to $0$ in probability.
\end{lem}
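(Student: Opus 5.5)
The plan is an induction on $k$: at each step I show that the residual periodogram $p^{(k-1)}_{\bY,N}$ is uniformly close to a deterministic function, and then locate the maximiser of that function on the scale $N^{-1}$. I read $\theta_{r_k}$ as the frequency with the largest amplitude $\sum_j(\alpha_{r_k j}^2+\beta_{r_k j}^2)$ among those not yet removed, with ties excluded. I also assume each $\theta_{r_k}$ lies on, or within $o(N^{-1})$ of, the Fourier grid $\Theta$. Without this the maximiser over $\Theta$ cannot satisfy $N(\hat\theta_{r_k}-\theta_{r_k})\to_p 0$, so the statement implicitly requires it. The induction hypothesis is that $N(\hat\theta_{r_l}-\theta_{r_l})\to_p 0$ for all $l<k$.

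\emph{Step 1 (uniform noise bound).} I first show that $\max_{\theta\in\Theta}\bigl|N^{-1}\sum_{t=1}^N X_t^{(j)}e^{\mi t\theta}\bigr|=o_p(1)$ for each $j\le p$. Assumption~\ref{k_cum_ass} gives absolutely summable cumulants of every order, so by expanding moments into cumulants, in the manner of \citet{brillinger2001time}, I get $\mathbb{E}\bigl|N^{-1}\sum_t X_t^{(j)}e^{\mi t\theta}\bigr|^{2m}\le C_m N^{-m}$ uniformly in $\theta$. A union bound over the $\lceil N/2\rceil$ points of $\Theta$ with $m=2$ then gives a tail bound of order $N\cdot N^{-2}\epsilon^{-4}\to0$. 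The same argument handles the term $\bar{\bm X}$.

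\emph{Step 2 (removing the estimated components).} The residual at stage $k$ is
\[
Y_t^{(j)}-\sum_{l<k}\bigl(\bm{Q}_{t,r_l}\hat\bpsi_{r_l}\bigr)^{(j)}
=\text{(remaining true sinusoids)}+X_t^{(j)}+R_t^{(j)} .
\]
The remainder $R_t$ collects two differences. The first is $\hat\bpsi_{r_l}-\bpsi_{r_l}$, which is $o_p(1)$ by the argument of Lemma~\ref{para_as_converge} applied with the estimated frequencies plugged in. The second is $\cos(t\hat\theta_{r_l})-\cos(t\theta_{r_l})$ and the corresponding sine difference. These are bounded by $|t(\hat\theta_{r_l}-\theta_{r_l})|\le N|\hat\theta_{r_l}-\theta_{r_l}|=o_p(1)$ uniformly in $t\le N$, by the induction hypothesis. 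Hence $\max_{t\le N}|R_t|=o_p(1)$, and its Fourier average is $o_p(1)$ uniformly in $\theta$. Combining with Step 1 gives
\[
p^{(k-1)}_{\bY,N}(\theta)=\sum_{j=1}^p\bigl|T^{(k)}_{j,N}(\theta)\bigr|^2+o_p(1)\quad\text{uniformly on }\Theta,
\]
where $T^{(k)}_{j,N}$ is defined as in Lemma~\ref{p_y_approx} but with the first $k-1$ frequencies removed. The cross terms are controlled by Cauchy--Schwarz.

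\emph{Step 3 (localisation at rate $N^{-1}$).} Write $\theta=\theta_{r_k}+\delta/N$. Summing the geometric series gives, uniformly in $\delta$,
\[
\sum_j\bigl|T^{(k)}_{j,N}(\theta)\bigr|^2
=\tfrac14\sum_j\bigl(\alpha_{r_kj}^2+\beta_{r_kj}^2\bigr)\,F(\delta)+o(1),
\qquad F(\delta)=\Bigl(\tfrac{\sin(\delta/2)}{\delta/2}\Bigr)^2 .
\]
The contributions of the other frequencies $\theta_{r_m}$, $m>k$, and of the mirror frequency $-\theta_{r_k}$ are $O\bigl((N\min|\theta\pm\theta_{r_m}|)^{-1}\bigr)$. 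Away from $\theta_{r_k}$ this expression is at most $\tfrac14\max_{m>k}\sum_j(\alpha_{r_mj}^2+\beta_{r_mj}^2)+o(1)$. That bound is strictly below the peak value because $\theta_{r_k}$ has the strictly largest remaining amplitude. Since $F(0)=1>F(\delta)$ for every $\delta\neq0$, and $F(\delta)\le 1-c$ for $|\delta|\ge\epsilon$, the event $\{|N(\hat\theta_{r_k}-\theta_{r_k})|>\epsilon\}$ forces the $o_p(1)$ error from Step 2 to exceed a fixed positive constant. Its probability therefore tends to zero. A grid point within $o(1/N)$ of $\theta_{r_k}$ exists by the grid assumption, which guarantees the peak value is actually attained on $\Theta$.

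The main obstacle is Step 1, namely obtaining the uniform $o_p(1)$ bound on the grid from the cumulant condition. The $2m$-th moment bound must be uniform in $\theta$ so that the union bound over $N$ frequencies goes through. A secondary technical point is Step 2: the induction hypothesis is used at the sharp rate $N|\hat\theta-\theta|\to_p0$, and this rate is exactly what keeps the phase errors $t(\hat\theta-\theta)$ small uniformly over $t\le N$.
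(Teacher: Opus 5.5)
Your proof is correct and follows essentially the same route as the paper. It gives a uniform (over $\Theta$) approximation of the deflated periodogram $p^{(k-1)}_{\bY,N}$ by the deterministic $\sum_j|T^{(k)}_{j,N}|^2$, in the spirit of Lemma~\ref{p_y_approx}, followed by a Fej\'er-kernel localization on the scale $1/N$ and induction over the deflation steps. The one technical variation is that you control the noise with fourth moments and a union bound over the $\lceil N/2\rceil$ Fourier frequencies, which needs only $\mathcal{C}_2,\mathcal{C}_4<\infty$, whereas the paper appears to rely on the almost-sure bounds that Assumption~\ref{k_cum_ass} is stated to provide (its proof is in the Supplementary Material, so I am inferring this).

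Your grid hypothesis is genuinely necessary for the statement as written. Since $\hat\theta_{r_k}\in\Theta$, we have $N|\hat\theta_{r_k}-\theta_{r_k}|\ge N\,\mathrm{dist}(\theta_{r_k},\Theta)$, and this does not tend to $0$ for a generic fixed frequency. The rate $O_p\bigl((\log\log N/N^3)^{1/2}\bigr)$ quoted in the Appendix suggests the paper's argument tacitly maximizes over the continuum, where no grid condition is needed. In that setting your Step~1 would have to be made uniform in $\theta\in(0,\pi)$, e.g.\ via an almost-sure $O\bigl((N\log N)^{1/2}\bigr)$ bound for $\sup_\theta\bigl|\sum_t X^{(j)}_t e^{\mi t\theta}\bigr|$ under this cumulant condition.
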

The consistency of  $\hat{\bpsi}_{r_k}$ follows from Lemma \ref{para_as_converge}.
If $r_0$ is provisionally known,
then the periodicities $\{\theta_1, \ldots, \theta_{r_0}\}$ are correctly
specified by $\{\theta_{r_1}, \ldots, \theta_{r_{r_0}}\}$ in asymptotics.
Even if some $\theta_{r_l} \not \in \{\theta_1, \ldots, \theta_{r_0}\}$,
by the consistency shown in Lemma \ref{para_as_converge},
$\hat{\bpsi}_{r_l}$ converges to $0$ in probability.
Therefore, the procedure is robust to the overspecification
of the number of periodicities.

In summary, 
the parameter estimation problem for functional harmonic regression models
has been considered in this section.
The consistency for the estimation of each parameter
has been established, 
as functional time series are reduced to multivariate ones.
Hereafter, we develop an information criterion 
following our parameter estimation,
which has not been considered in the literature for multivariate time series.

\section{Sample-based selection of the number of periodicities}
\label{sec3}
In this section, we propose an information criterion for 
specifying the unknown parameter $r_{0}$.
Let us recall that the cross-covariance matrix is denoted by $\Xi_{t}$.
The new information criterion is based on the first principal component
of the matrix $\Xi_0$.
In other words, let $\nu_1$ be
\begin{equation} \label{eq:pca}
\nu_1 = \arg \max_{\nu \not= \bm{0}} \frac{\nu^{\T} \Xi_0 \nu}{
\nu^{\T} \nu}.
\end{equation}
The linear combination $\nu_1^{\T} \bm{X}_t (:=\tilde{X}_t$, say) 
is known as the first principal component (e.g., \citet[p.438]{Shumway2000}).
We remark that,
although the basis functions in the Karhunen-Lo{\'e}ve expansion 
are theoretically determined by the covariance operator,
the specific choice of basis is not critical in practice, 
when using the approach below based on the first principal component \eqref{eq:pca}. 

The adoption of the first principal component
keeps consistency with the approach of using the empirical functional principal component
in Section \ref{sec2}.
Even if the orthonormal basis of $\mathcal{H}$
is chosen only as a set of linearly independent functions,
our proposed information criterion still works
under the consideration based on the first principal component.
Furthermore,
$\nu_1^{\T} \bm{Y}_t (:=\tilde{Y}_t$, say) 
is a linear transformation of $\bm{Y}_t$,
so that 
the frequency parameters $\theta_k$, $k = 1, \ldots, r_0$,
are invariant under this linear transformation.

\begin{rem}
Although the frequency parameters $\theta_k$, $k = 1, \ldots, r_0$, are invariant under the linear transformation, 
the practical detectability of periodic components depends on 
the strength of their projection onto the first principal component direction. 
If the periodic signal is nearly orthogonal to $\nu_1$, 
the resulting signal-to-noise ratio may be reduced, potentially affecting the detection performance.
\end{rem}

From \eqref{new_FTS_model2_sec}, the model now is
\begin{equation}\label{linear_transformation_model}
\tilde{Y}_t=\bm{q}_t(r_0)^{\T}\tilde{\bpsi}+\tilde{X}_t,
\end{equation}  
where 
\begin{align*}
\tilde{\bpsi}&=(E_{(2r_0+1)}\otimes \nu_1^{\T})\bpsi\\
&=(\inp{\nu_1}{\bm{\mu}}, 
\inp{\nu_1}{\bm{\alpha}_1}, 
\inp{\nu_1}{\bm{\beta}_1},\ldots,
\inp{\nu_1}{\bm{\alpha}_{r_0}}, 
\inp{\nu_1}{\bm{\beta}_{r_0}})^{\T}\\
&=(\tilde{\mu},\tilde{\alpha}_{1}, \tilde{\beta}_{1},\ldots,\tilde{\alpha}_{r_0},\tilde{\beta}_{r_0})^{\T} \in \mathbb{R}^{(2r_0+1)\times 1}.
\end{align*}
The second identity follows $\mathrm{vec}(ABC)=(\bm{E} \otimes AB)\mathrm{vec}(C)$ (e.g., \citet[p.662]{lutkepohl2005new}).

Now, we consider the prediction error
by fitting an $h$-order autoregressive model  to the first principal component $\tilde{X}_t$,
and denote the error by $\hat{\sigma}^2(h)$.
Let $\hat{X}_t(r)$ be the residuals of 
linear regression on
all trigonometric functions $\cos (t\hat{\theta}_k)$ and $\sin (t\hat{\theta}_k), k=1,\ldots,r$, such as
\[
\hat{X}_t(r)=\tilde{Y}_t-\hat{\bm{q}}_t(r)^{\T}\hat{\tilde{\bpsi}}(r),
\]
where $\hat{\bm{q}}_t(r) = \bigl(1, \cos (t \hat{\theta}_1), \sin (t \hat{\theta}_1), \ldots, \cos (t \hat{\theta}_{r}), \sin (t \hat{\theta}_{r})
\bigr)^{\T}$, and $\hat{\tilde{\bpsi}}(r)$ is the least squares estimates of $\tilde{\bpsi}(r)$.
Accordingly, 
an approximate prediction error 
$\hat{\sigma}_r^2(h)$ is naturally defined in terms of 
the residuals $\hat{X}_t(r)$ by fitting an $h$-order autoregressive model to $\hat{X}_t(r)$.

We propose the following criterion to detect the true number $r_0$ of periodicities:
\begin{equation}\label{criterion}
\varphi(r, h)=\log \{\hat{\sigma}_r^2(h)\}+(\kappa r+h)\frac{\log N}{N},
\end{equation}
where $\kappa:=\kappa_N$ is some positive constant. 
To be specific,
for each number $r$ of periodicities,
we can find out an autoregressive model of order $h$
so that the model minimizes $\varphi(r, h)$;
let $\hat{h}_r$ be the minimizer.
We compare the different values of the criterion $\varphi(r, \hat{h}_r)$,
and choose $\hat{r}$
as the minimizer of $\varphi(r, \hat{h}_r)$ for estimating the true number $r_0$.

\begin{asp}
\label{h_condition}
Let $h:=h_N$ be a sequence such that
$h\to \infty$ and $h = O(\log N/\log \log N)$, as $N \to \infty$.\\
\end{asp}
A feasible and natural choice is $h = \log \log N$, which satisfies Assumption \ref{h_condition}.

To reveal the performance of our proposed information criterion,
we briefly explain the frequency domain framework for functional time series.
Let $\mathcal{F}_{\theta}: \mathcal{H} \to \mathcal{H}$ be the spectral density operator at frequency $\theta$ for 
the functional time series $X_t(u)$.
Assuming $\sum_{t\in \mathbb{Z}}\norm{\Gamma_t}_1<\infty$, where $\norm{\cdot}_1$ denotes the trace norm, 
and following \cite{Panaretos2013}, the spectral density operator $\mathcal{F}_{\theta} : \mathcal{H}\to \mathcal{H}$ has the expression
\[
\mathcal{F}_{\theta}(\cdot)
=
\frac{1}{2\pi}\sum_{t\in \mathbb{Z}}\exp (-\mi\theta t)\Gamma_t(\cdot).
\]
The spectral density function of $\tilde{X}_t$ is $f_{\tilde{X}}(\theta) = \inp{\mathcal{F}_{\theta}(\nu_1)}{\nu_1}$
in observing the orthonormal basis $(\nu_{\ell};\, \ell = 1, \ldots, p)$.
Theoretically, we obtain the following lemma for 
the prediction error $\hat{\sigma}^2(h)$ and the approximate error $\hat{\sigma}_r^2(h)$, $r = 0$.
\begin{lem}
\label{error_evaluate}
Suppose $\{X_t; \, t \in \Z\}$ is a zero-mean stationary process satisfying Assumption \ref{k_cum_ass}.
Under Assumptions \ref{asp:iden} and \ref{h_condition}, 
we obtain the following approximation
\[
\hat{\sigma}^2_0(h)=
\hat{\sigma}^2(h)+\sum_{k=1}^{r_{0}}\frac{|\eta_h(e^{\mi \theta_k})|^24\pi f_{\tilde{X}}(\theta_k)}{h}+o(h^{-1}),
\]
where $\eta_h(e^{\mi \theta})=1+\sum_{j=1}^h\eta_je^{\mi j\theta}$ with coefficients
$\eta_1, \ldots, \eta_h$ such that 
\begin{equation} \label{eq:best_linear}
\mathbb{E}\abs{\tilde{X}_t-\eta_1\tilde{X}_{t-1}-\ldots-\eta_h\tilde{X}_{t-h}}^2
=
\min_{b_1,\ldots,b_h} \mathbb{E}\abs{\tilde{X}_t-b_1\tilde{X}_{t-1}-\ldots-b_h\tilde{X}_{t-h}}^2.
\end{equation}
\end{lem}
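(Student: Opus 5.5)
The approach is to reduce the statement to a deterministic perturbation problem for Toeplitz quadratic forms. With $r=0$ only the mean is fitted, so $\hat X_t(0)=\tilde Y_t-\bar{\tilde Y}$. By \eqref{linear_transformation_model},
\[
\hat X_t(0)=s_t+\tilde X_t+O_p(N^{-1/2}),\qquad s_t=\sum_{k=1}^{r_0}\bigl(\tilde\alpha_k\cos(t\theta_k)+\tilde\beta_k\sin(t\theta_k)\bigr).
\]
We treat $\hat\sigma^2(h)$ and $\hat\sigma_0^2(h)$ as Yule--Walker (equivalently least squares) prediction errors. Each is then the minimum of $\bm b^{\T}\hat C\bm b$ over $\bm b=(1,-b_1,\dots,-b_h)^{\T}$, where $\hat C$ is the $(h+1)\times(h+1)$ sample autocovariance Toeplitz matrix of $\tilde X_t$ or of $\hat X_t(0)$, respectively.

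\textbf{Step 1 (autocovariances).} We first compute the sample autocovariances of the residual process. Expanding the product gives, uniformly in $0\le j\le h$,
\[
\hat c_{\hat X}(j)=\hat c_{\tilde X}(j)+\sum_{k=1}^{r_0}\frac{A_k^2}{2}\cos(j\theta_k)+R_j,\qquad A_k^2=\tilde\alpha_k^2+\tilde\beta_k^2 .
\]
The remainder $R_j$ collects two kinds of terms. The first are the cross terms $N^{-1}\sum_t s_t\tilde X_{t+j}$; these are $O((\log\log N/N)^{1/2})$ almost surely by a law of the iterated logarithm for Fourier transforms of $\tilde X_t$, which follows from the cumulant summability in Assumption \ref{k_cum_ass} as in \citet{brillinger2001time}. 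The second are the cross-frequency terms, which are $O(N^{-1})$ because the $\theta_k$ are distinct and lie in $(0,\pi)$. Since $h=O(\log N/\log\log N)$ by Assumption \ref{h_condition}, the total effect of the $R_j$ on an $(h+1)$-dimensional quadratic form is $o(h^{-1})$. The same bound lets us replace $\hat c_{\tilde X}$ by the true autocovariances wherever they enter the perturbation term.

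\textbf{Step 2 (low-rank perturbation).} Let $\bm v_k=(1,e^{\mi\theta_k},\dots,e^{\mi h\theta_k})^{\T}$. The matrix for $\hat X_t(0)$ is then
\[
\hat C_{\tilde X}+\sum_{k=1}^{r_0}\frac{A_k^2}{4}\bigl(\bm v_k\bm v_k^{*}+\bar{\bm v}_k\bm v_k^{\T}\bigr),
\]
a perturbation of rank $2r_0$. We minimize the constrained quadratic form with this perturbation using Lagrange multipliers together with the Sherman--Morrison--Woodbury formula. For a single pair $\pm\theta_k$ and constant $c=A_k^2/2$, the extra term equals
\[
\frac{c\,\abs{\eta_h(e^{\mi\theta_k})}^2\,\sigma^2(h)}{\sigma^2(h)+c\,\sigma^2(h)\,\bm v_k^{*}\Gamma_h^{-1}\bm v_k-\cdots}
\]
up to the corrections coming from the constraint $b_0=1$. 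Here $\eta_h$ is the best linear predictor filter from \eqref{eq:best_linear}, which arises because the unperturbed minimizer is exactly $(1,-\eta_1,\dots,-\eta_h)$.

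\textbf{Step 3 (Toeplitz asymptotics).} We then use the classical Grenander--Szeg\H{o} approximation
\[
\bm v_k^{*}\Gamma_h^{-1}\bm v_k=\frac{h}{2\pi f_{\tilde X}(\theta_k)}\bigl(1+o(1)\bigr),
\]
which holds since $f_{\tilde X}$ is bounded away from zero and smooth under Assumption \ref{k_cum_ass}. This makes the denominator of order $h$, so each frequency contributes $4\pi f_{\tilde X}(\theta_k)\abs{\eta_h(e^{\mi\theta_k})}^2/h+o(h^{-1})$, independently of $A_k$ to leading order; the factor $4\pi=2\cdot 2\pi$ accounts for the conjugate pair $\pm\theta_k$. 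Interactions between distinct frequencies are of smaller order, because $\bm v_k^{*}\Gamma_h^{-1}\bm v_l=O(1)$ for $k\neq l$, whereas the diagonal entries are of order $h$.

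\textbf{Main obstacle.} The hardest part will be Step 2 combined with Step 3: carrying out the multi-frequency Woodbury inversion with the $b_0=1$ constraint, and controlling the $o(1)$ errors uniformly as $h\to\infty$. First I would handle $r_0=1$ exactly, writing the $2\times2$ capacitance matrix explicitly. I would then argue that for several frequencies the $2r_0\times 2r_0$ capacitance matrix is asymptotically block diagonal, with diagonal blocks of order $h$ and off-diagonal blocks $O(1)$, so that the contributions add.
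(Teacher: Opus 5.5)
Your overall route is sound. For $r=0$ only the mean is removed, so the Yule--Walker matrix of $\hat X_t(0)$ is that of $\tilde X_t$, plus the rank-$2r_0$ term $\sum_k\frac{A_k^2}{4}(\bm v_k\bm v_k^*+\bar{\bm v}_k\bm v_k^{\T})$, plus a Toeplitz remainder of operator norm at most $(2h+1)\max_{j\le h}\abs{R_j}$. The prediction error is exactly $1/(\bm e_0^{\T}C^{-1}\bm e_0)$, so the constraint $b_0=1$ needs no extra correction; for a single rank-one term $c\,\bm v_k\bm v_k^*$ your display is exact, the omitted term being $c\abs{\eta_h(e^{\mi\theta_k})}^2$. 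Woodbury together with $\bm v_k^*\Gamma_h^{-1}\bm v_k\sim h/(2\pi f_{\tilde X}(\theta_k))$ then does give $4\pi f_{\tilde X}(\theta_k)\abs{\eta_h(e^{\mi\theta_k})}^2/h$ per frequency.

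\textbf{The gap.} Step 3 claims that $f_{\tilde X}$ is bounded away from zero ``under Assumption \ref{k_cum_ass}''. That assumption only gives $\sum_t\abs{\tilde\gamma(t)}<\infty$ via $\mathcal C_2<\infty$, where $\tilde\gamma$ is the autocovariance of $\tilde X_t$. So $f_{\tilde X}$ is continuous and bounded above, but there is no lower bound (and no smoothness). For example, a Gaussian $\tilde X_t=\epsilon_t-2\cos(\theta_1)\epsilon_{t-1}+\epsilon_{t-2}$ is compatible with Assumption \ref{k_cum_ass}, yet $f_{\tilde X}(\theta_1)=0$. In that case:
\begin{enumerate}[(a)]
\item $\norm{\Gamma_h^{-1}}\to\infty$, so you lose the bound on the minimisers that you need both to absorb the $R_j$ and to pass from $\hat C_{\tilde X}$ to $\Gamma_h$;
\item $h^{-1}\bm v_1^*\Gamma_h^{-1}\bm v_1\to\infty$, so the Grenander--Szeg\H{o} step fails and the $k=1$ contribution is no longer given by the stated formula.
\end{enumerate}
You must add $\inf_\theta f_{\tilde X}(\theta)>0$ as a hypothesis (the usual invertibility requirement for long autoregressive approximation); it cannot be derived.

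You also silently need $A_k\neq0$. Since $\tilde\alpha_k=\alpha_k\inp{\nu_1}{\bomega}$ and $\tilde\beta_k=\beta_k\inp{\nu_1}{\bomega}$, Assumption \ref{asp:iden} allows $A_k=0$ whenever $\nu_1\perp\bomega$. Then $\hat\sigma_0^2(h)-\hat\sigma^2(h)=o(h^{-1})$, while the right-hand side adds a term of exact order $h^{-1}$. Your ``independently of $A_k$'' step really uses $A_k^2h\to\infty$.

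\textbf{Smaller corrections} (these do not affect the architecture).
\begin{enumerate}[(a)]
\item With only $\sum_t\abs{\tilde\gamma(t)}<\infty$ plus positivity, Baxter-type bounds give $\bm v_k^*\Gamma_h^{-1}\bm v_l=o(h)$ for $k\neq l$ and for the pair $\bm v_k,\bar{\bm v}_k$, not $O(1)$. The $O(1)$ bound would need something like $\sum_t\abs{t}\abs{\tilde\gamma(t)}<\infty$. However, $o(h)$ off-diagonal entries against diagonal entries of order $h$ are all your block-diagonal Woodbury argument needs, and Grenander--Szeg\H{o} uses continuity and positivity, not smoothness.
\item The cumulant condition gives uniform-in-frequency bounds of order $(\log N/N)^{1/2}$ (as in Brillinger), not an LIL. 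This suffices, since your crude bound only needs $h^2\max_{j\le h}\abs{R_j}\to0$.
\item Replacing $\hat c_{\tilde X}$ by $\tilde\gamma$ is a separate, quadratic bound rather than ``the same bound''. For instance, $\max_{j\le h}\abs{\hat c_{\tilde X}(j)-\tilde\gamma(j)}=O_p((h/N)^{1/2})$ follows from $\mathcal C_2,\mathcal C_4<\infty$, which is again enough.
\item Because the unperturbed minimiser is $(1,-\eta_1,\dots,-\eta_h)$, the filter that actually appears is $1-\sum_j\eta_je^{\mi j\theta}$. The statement instead defines $\eta_h(e^{\mi\theta})=1+\sum_j\eta_je^{\mi j\theta}$ with $\eta_j$ from \eqref{eq:best_linear}, so say explicitly which sign convention you are proving.
\end{enumerate}
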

Here, 
$\eta_h(z) = 1 + \sum_{j=1}^h \eta_j z^j$ is the inverse filter of the best linear predictor as in \eqref{eq:best_linear}.
From Lemma \ref{error_evaluate}, it is found that the difference between $\hat{\sigma}_0^2$ and $\hat{\sigma}^2$ 
has an expression of relevant quantities
$f_{\tilde{X}}(\theta_k)$, the spectral density function of $\tilde{X}_t$,
and the summation running from 1 to the true number $r_0$ of periodicities. 

We now complete the algorithm to estimate $r_0$ by the information criterion $\varphi(r, h)$ with an upper bound $H$.
From a theoretical perspective, $H$ is assumed to satisfy $H=o(N^{1/4})$ to ensure the consistency,
while in practice,
a fixed moderate value of $H$ typically performs well in finite samples.

\begin{algorithm}[H]
\caption{The algorithm of detecting the number $r_0$ of periodicities.}
\label{al1}
\begin{tabbing}
   \qquad \enspace Set : $r=0$.\\
   \qquad \qquad Step 1\enspace  For $h\leq H$, fit an $h$-order autoregressive model to $\hat{X}_t(0)$ to compute $\hat{\sigma}_0^2(h)$.\\
   \qquad \qquad Step 2\enspace   Minimize $\varphi(0,h)$ with respect to $h$ to obtain $\varphi(0,\hat{h}_0)$.\\
   \qquad \qquad Step 3\enspace   For fixed $r$, estimate the $(r+1)$th frequency $\hat{\theta}_{r+1}$ by utilizing \eqref{theta_estimate_eq}.\\
  \qquad \qquad Step 4\enspace   For $h\leq H$, fit an $h$-order autoregressive model to $\hat{X}_t(r+1)$ to compute $\hat{\sigma}_{r+1}^{2}$.\\
   \qquad \qquad Step 5\enspace Minimize $\varphi(r+1,h)$ with respect to $h$ to obtain $\varphi(r+1,\hat{h}_{r+1})$.\\
\qquad \qquad \qquad \enspace If $\varphi(r+1,\hat{h}_{r+1})<\varphi(r,\hat{h}_r)$,\\
\qquad \qquad \qquad \qquad \enspace Repeat Step 3 through Step 5 with $r \leftarrow r+1$.\\
\qquad \qquad \qquad \enspace Else  \\
\qquad \qquad  \qquad \qquad \enspace Stop the recursion and obtain $\hat{r}=r$.\\
\qquad \enspace Output : The estimated number $\hat{r}$ of periodicities.
\end{tabbing}
\end{algorithm}

\noindent

The following main result provides the theoretical justification for the Algorithm \ref{al1} 
by showing that the estimated number of periodicities in the output is consistent.
\begin{thm}
\label{criterion_thm}
Suppose $\{X_t; \, t \in \Z\}$ is a zero-mean stationary process satisfying Assumption \ref{k_cum_ass}.
Under Assumptions \ref{asp:iden} and \ref{h_condition}, $\hat{r}$ converges to $r_0$ in probability.
\end{thm}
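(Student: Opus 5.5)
Because Algorithm~\ref{al1} stops at the first $r$ for which $\varphi(r+1,\hat h_{r+1})\ge\varphi(r,\hat h_r)$, the event $\{\hat r = r_0\}$ holds as soon as two things happen. First, $\varphi(r+1,\hat h_{r+1})<\varphi(r,\hat h_r)$ for every $r<r_0$ (no underfitting). Second, $\varphi(r_0+1,\hat h_{r_0+1})\ge \varphi(r_0,\hat h_{r_0})$ (stop at $r_0$). I will prove that each holds with probability tending to one; since $r_0$ is fixed, a union bound over $r<r_0$ finishes the argument. Throughout, Lemma~\ref{theta_estimate} lets me replace the estimated frequencies by the true ones. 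Indeed, $N(\hat\theta_{r_k}-\theta_{r_k})\to_p0$ makes $\sup_{t\le N}|\cos(t\hat\theta)-\cos(t\theta)|\to_p0$, and the same holds for $\sin$. Hence the residuals $\hat X_t(r)$ differ from the residuals obtained with true frequencies by an amount negligible in the sample autocovariances of order $h\le H$.

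\emph{Underfitting.} Fix $r<r_0$. The residual $\hat X_t(r)$ still contains $\sum_{k>r}(\tilde\alpha_k\cos t\theta_k+\tilde\beta_k\sin t\theta_k)$ plus $o_p(1)$ coefficient errors, by Lemma~\ref{para_as_converge}. I would generalize Lemma~\ref{error_evaluate}, whose proof handles $r=0$, to any $r$: the sample autocovariances of $\hat X_t(r)$ converge to $\gamma_{\tilde X}(j)+\sum_{k>r}\tfrac12(\tilde\alpha_k^2+\tilde\beta_k^2)\cos(j\theta_k)$. This gives
\[
\hat\sigma_r^2(h)=\hat\sigma^2(h)+\sum_{k=r+1}^{r_0}\frac{c_k(h)}{h}+o(h^{-1}),
\]
with $c_k(h)=|\eta_h(e^{\mi\theta_k})|^2 4\pi f_{\tilde X}(\theta_k)$ bounded away from zero, since $f_{\tilde X}>0$ and $|\eta_h|^2\to 1/(2\pi f_{\tilde X})$ uniformly. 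In particular $c_k(h)\to 2$. The same argument gives $\hat\sigma_{r+1}^2(h)=\hat\sigma^2(h)+\sum_{k>r+1}c_k/h+o(h^{-1})$. Consequently $\log\hat\sigma_r^2(h)-\log\hat\sigma_{r+1}^2(h)\ge c/h$ uniformly in the relevant $h$. The penalty increment is $\kappa\log N/N$, while $1/h\gtrsim \log\log N/\log N$ by Assumption~\ref{h_condition}, and this dominates $\kappa\log N/N$ provided $\kappa_N\log N/N=o(\log\log N/\log N)$. Comparing $\varphi(r+1,\hat h_{r+1})\le\varphi(r+1,\hat h_r)$ with $\varphi(r,\hat h_r)$ then yields strict decrease with probability tending to one.

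\emph{Overfitting.} At $r=r_0$ all periodic components have been removed. Fitting an extra frequency $\hat\theta_{r_0+1}$ at the periodogram maximum of pure noise reduces the residual variance by roughly $\max_{\theta\in\Theta} N^{-1}|\sum_t\hat X_t(r_0)e^{\mi t\theta}|^2/N$. Under Assumption~\ref{k_cum_ass} the cumulant bounds give the Whittle/Brillinger-type almost sure bound $\max_\theta |\sum_t \tilde X_te^{\mi t\theta}|^2=O(N\log N)$. It follows that $\hat\sigma_{r_0}^2(h)-\hat\sigma_{r_0+1}^2(h)\le C\log N/N$ uniformly in $h\le H$, so the log-criterion drops by at most $C'\log N/N$. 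This is beaten by the penalty $\kappa\log N/N$ once $\kappa>C'$; since $\kappa=\kappa_N$ is ours to choose, I would take it as a slowly diverging sequence compatible with the previous paragraph. The AR-order parts are handled by noting $\hat h_{r_0+1}$ can be compared at the common order $\hat h_{r_0}$ via $\varphi(r_0,\hat h_{r_0})\le\varphi(r_0,\hat h_{r_0+1})$.

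\emph{Main obstacle.} The hardest step is the uniform (in $\theta\in\Theta$ and $h\le H$) control in the overfitting case. It requires a maximal periodogram bound of order $\log N$ derived from Assumption~\ref{k_cum_ass} through a moment-generating/cumulant argument and a union bound over the $N/2$ Fourier frequencies. It also requires showing that the AR-fitted prediction error responds to removing one sinusoid at most proportionally to that periodogram value; I would do this through Yule–Walker continuity with $H=o(N^{1/4})$. The underfitting side reduces essentially to Lemma~\ref{error_evaluate} applied to the partially cleaned residuals.
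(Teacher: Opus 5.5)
Your underfitting half follows the paper: extend Lemma~\ref{error_evaluate} from $r=0$ to general $r$, obtain a margin $\hat\sigma^2_r(h)-\hat\sigma^2_{r+1}(h)\approx c_{r+1}(h)/h$, and note that $1/h$, which is at least of order $\log\log N/\log N$, dominates $\kappa\log N/N$.

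\textbf{The gap} is in how you dispose of the estimation error. You use only $N(\hat\theta_{r_k}-\theta_{r_k})\to_p 0$ and $o_p(1)$ amplitude errors. These perturb the residuals and their sample autocovariances by an unquantified $o_p(1)$, which is not negligible on either of the scales that matter.
\begin{itemize}
\item In the underfitting step the margin is only of order $1/h\to 0$. By Lemma~\ref{error_evaluate} itself, the excess AR($h$) error caused by a sinusoidal component does not depend on its amplitude. So an imperfectly removed component of size $\epsilon_N$ with $h\epsilon_N^2\not\to 0$ can add a term of order $1/h$ to $\hat\sigma^2_{r+1}(h)$ and wipe out the margin.
\item In the overfitting step your maximal bound concerns $\tilde X_t$, but you need it for $\hat X_t(r_0)$. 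That requires the leftover of the $r_0$ fitted sinusoids to have Fourier transform $O_p((N\log N)^{1/2})$ uniformly in $\theta$, i.e.\ $\epsilon_N=O_p((\log N/N)^{1/2})$.
\end{itemize}
The paper closes this with the rates $O_p((\log\log N/N)^{1/2})$ for $\hat{\tilde\alpha}_r,\hat{\tilde\beta}_r$ and $O_p((\log\log N/N^3)^{1/2})$ for $\hat\theta_r$, taken from the proofs of Lemmas~\ref{para_as_converge} and~\ref{theta_estimate}. These make the leftover $O_p((\log\log N/N)^{1/2})$, which is enough for both steps. You must import or prove such rates; the lemma statements alone do not give them.

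\textbf{Your overfitting step takes a different route, and it is the sounder one.} The paper asserts that for $r>r_0$ the spurious amplitudes are $o_p((\log\log N/N)^{1/2})$. The drop in $\hat\sigma^2$ would then be $o_p(h\log\log N/N)=o_p(\log N/N)$, and any $\kappa>0$ would win. That bound holds at a fixed frequency, but not at $\hat\theta_{r_0+1}$, which maximizes over roughly $N/2$ grid points.

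Your maximal-periodogram bound $O(N\log N)$ gives the right order $\log N/N$; it is the classical Brillinger-type almost sure bound, for which Assumption~\ref{k_cum_ass} is designed. The order is sharp. For $p=1$ and Gaussian white noise the drop in $\log\hat\sigma^2$ is $2(1+o_p(1))\log N/N$. So for $\kappa<2$, Algorithm~\ref{al1} fails to stop at $r_0$ with probability tending to one, in line with the overfitting at $\kappa=1,2$ in Table~\ref{tab:all_omega_960}. A lower bound on $\kappa$ (or $\kappa_N\to\infty$) is therefore a genuine hypothesis. State it as one rather than treating $\kappa$ as ``ours to choose''.

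Checking only $r_0+1$ matches the stopping rule of Algorithm~\ref{al1}, whereas the paper argues monotonicity for all $r>r_0$. Two details remain:
\begin{itemize}
\item \emph{Uniformity in $h\le H$.} One option is to bound the drop by a constant times $\abs{\hat\eta_h(e^{\mi\hat\theta})}^2\max_\theta I_N(\theta)/N$, where $I_N(\theta)=N^{-1}\abs{\sum_t\hat X_t(r_0)e^{\mi t\theta}}^2$ and the fitted AR transfer functions $\hat\eta_h$ are bounded uniformly in $h$. The other is crude Yule--Walker perturbation, which gives only $O(h\log N/N)$ and so needs $\kappa_N/H\to\infty$. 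That is still compatible with your underfitting requirement $\kappa_N\log N/N=o(1/H)$, since $H=o(N^{1/4})$.
\item \emph{A constant.} $\abs{\eta_h(e^{\mi\theta})}^2\to\sigma^2/(2\pi f_{\tilde X}(\theta))$, with $\sigma^2$ the innovation variance, so $c_k(h)\to 2\sigma^2$, not $2$.
\end{itemize}
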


\noindent
Therefore, the estimated number $\hat{r}$ of periodicities by utilizing Algorithm \ref{al1} has the consistency.

\section{Simulation}
\label{sec4}
In this section, we verify that the proposed criterion is insensitive to the choice of the parameter $\kappa$ as 
the length of observation $N$ increases through numerical simulations.
For the proposed criterion \eqref{criterion}, 
the main concern is that the value of $\kappa$ may potentially have influence on 
the model selection of true number of periodicities.
To alleviate this concern, we check the  ``stable'' range of  $\kappa$ 
by evaluating the number of simulations arriving at the true number of periodicities among all simulations.
The term ``stable'' refers to the ability to 
correctly estimate the number of periodicities most frequently
across all simulations.
We assess the ``stable'' range of $\kappa$ by $100$ simulations.
Additionally, we present the optimal range 
of the parameter $\kappa$ based on the following criterion.
The optimal range of $\kappa$ is determined
when 
the rate of correctly estimating the number of periodicities 
is equal to or greater than $90\%$.

First, we consider the following model with $r_0 = 3$:
\begin{equation}
\label{true_model}
Y_t(u)=\left(\cos \left(\frac{2\pi}{5}t\right)+\cos \left(\frac{2\pi}{6}t\right)+\cos \left(\frac{2\pi}{15}t\right)\right)(1+u^2)+X_t(u).
\end{equation}
Here,  $X_t(u)$ is a stationary functional AR ($2$) model:
\[
X_t(u)=\Phi_1(X_{t-1}(u))+\Phi_2(X_{t-2}(u))+\epsilon_t(u),\qquad u\in[0,1],
\]
where $\epsilon_t(u)$ is a sequence of  i.i.d.~standard Gaussian elements in $\mathcal{H}$.
In other words, 
all projections $\inp{\epsilon_t(u)}{\nu}$ for $\nu\in \mathcal{H}$ are normally distributed with mean $0$ and variance $\inp{\Gamma(\nu)}{\nu}=1$.
Let 
$\Phi_1:\mathcal{H}\to \mathcal{H}$ be the coefficient operator satisfying, for cubic \emph{B}-spline basis functions $\nu_1,\ldots,\nu_p$ in $\mathcal{H}$,
\[
\begin{cases}
\inp{\Phi_1(\nu_i)}{\nu_j}=0.2&(i=j=1,\ldots,p),\\
\inp{\Phi_1(\nu_i)}{\nu_i}=0&(i\neq j)
\end{cases}
\]
with $p=30$.
Additionally, $\Phi_2:\mathcal{H}\to \mathcal{H}$ is the coefficient operator satisfying, 
for orthonormal basis functions $\nu_1,\ldots,\nu_p$ in $\mathcal{H}$, and positive integers $s=1,\ldots,[(p+2)/3]$,
\[
\begin{cases}
\inp{\Phi_2(\nu_{3s-2})}{\nu_{3s-2}}=0.7,&\\
\inp{\Phi_2(\nu_{3s-1})}{\nu_{3s-1}}=-0.5,&\\
\inp{\Phi_2(\nu_{3s})}{\nu_{3s}}=\inp{\Phi_2(\nu_{3s})}{\nu_{3s-2}}=0.3,&\\
\inp{\Phi_2(\nu_{3s})}{\nu_{3s-1}}= -0.1,&\\
\inp{\Phi_2(\nu_{3s-1})}{\nu_{3s}}=\inp{\Phi_2(\nu_{3s-1})}{\nu_{3s-2}}=0,&\\
\inp{\Phi_2(\nu_{3s-2})}{\nu_{3s}}=\inp{\Phi_2(\nu_{3s-2})}{\nu_{3s-1}}=0.&
\end{cases}
\]
Note that $0 < p\in \mathbb{N}$ and $3s$, $3s-1$, $3s-2 \leq p$. 

\begin{table}[H]
\centering
\caption{Full results of the ``stable'' range of $\kappa$ when the 100 simulations are generated from the true model and each simulation has $960$ observations of functional time series. }
\label{tab:all_omega_960}
\begin{tabular}{ccccccccccc}
\hline
     & $\hat{r}=0$ & $\hat{r}=1$ & $\hat{r}=2$ & $\hat{r}=3$ & $\hat{r}=4$ & $\hat{r}=5$ & $\hat{r}=6$ & $\hat{r}=7$ & $\hat{r}=8$ & $\hat{r}=9$ \\ \hline
$\kappa=1$  &  0   &  6   & 0    & $\bm{51}$    & 15    & 5    &  4   & 4    &  15   & 0    \\
$\kappa=2$  & 0    &  6   &  0  & $\bm{81}$    &  11   & 2    & 0    &  0   & 0    & 0    \\
$\kappa=3$  &  0   & 6    & 0    & $\bm{92}$    &  2   &  0   & 0    & 0    & 0    & 0    \\
$4\leq \kappa \leq 27$  & 0    & 6    &  0   & $\bm{94}$    &  0   &  0   &  0   &  0   &  0   &  0   \\
  $\kappa=28$   &  1   &  6   &  0   & $\bm{93}$    &  0   &  0   &  0   &  0   &  0   & 0    \\
 $29\leq \kappa \leq 48$  &  2   &  6   &  0   & $\bm{92}$    &  0   &  0   &  0   &  0   &  0   & 0    \\
$\kappa=49$  &  2   &  6   &  1   & $\bm{91}$    &  0   &  0   &  0   &  0   &  0   & 0    \\
$50 \leq \kappa \leq 53$  &  2   &  6   &  2   & $\bm{90}$    &  0   &  0   &  0   &  0   &  0   & 0    \\
$54\leq \kappa \leq 55$  &  2   &  6   &  3   & $\bm{89}$    &  0   &  0   &  0   &  0   &  0   & 0    \\
$\kappa=56$  &  2   &  6   &  4   & $\bm{88}$    &  0   &  0   &  0   &  0   &  0   & 0    \\
$57\leq \kappa \leq 58$  &  2   &  6   &  5   & $\bm{87}$    &  0   &  0   &  0   &  0   &  0   & 0    \\
$\kappa=59$  &  3   &  6   &  10   & $\bm{81}$    &  0   &  0   &  0   &  0   &  0   & 0    \\
$\kappa=60$  &  3  &  6   &  13   & $\bm{78}$    &  0   &  0   &  0   &  0   &  0   & 0    \\
$\kappa=61$  &  4  &  6   &  17   & $\bm{73}$    &  0   &  0   &  0   &  0   &  0   & 0    \\
$62\leq \kappa \leq 63$  &  4  &  6   &  19   & $\bm{71}$    &  0   &  0   &  0   &  0   &  0   & 0    \\
$\kappa=64$  &  4  &  6   &  23   & $\bm{67}$    &  0   &  0   &  0   &  0   &  0   & 0    \\
$\kappa=65$  &  4  &  6   &  26   & $\bm{64}$    &  0   &  0   &  0   &  0   &  0   & 0    \\
$\kappa=66$  &  4 &  6   &  28   & $\bm{62}$    &  0   &  0   &  0   &  0   &  0   & 0    \\
$\kappa=67$  &  4 &  6   &  32   & $\bm{58}$    &  0   &  0   &  0   &  0   &  0   & 0    \\
$\kappa=68$  & 4  &  6   &  36   & $\bm{54}$    &  0   &  0   &  0   &  0   &  0   & 0    \\
$\kappa=69$  &  4  &  6   &  41   & $\bm{49}$    &  0   &  0   &  0   &  0   &  0   & 0    \\
   \hline
\end{tabular}
\end{table}

In the simulation, we take the maximum of $h$ as $H = 8$,
the range of $\kappa$ as $1 \leq \kappa \leq 69$, and $r\leq 9$.
The results for other values of $\kappa$
are omitted in case that $\kappa$ is out of the ``stable'' range.
The result for $N=960$ is  shown in Table \ref{tab:all_omega_960}.
The results for $N=120$ and $N=480$ are provided in the Supplementary Material.
The plots in Figure \ref{fig:performance_hatr3}
show the rate when the period is correctly estimated
across all simulations for each $\kappa$ within the ``stable" range when  $N=120$, $480$, and $960$.

From Table \ref{tab:all_omega_960} and Figure \ref{fig:performance_hatr3}, 
our proposed criterion \eqref{criterion} is insensitive to 
the choice of $\kappa$ when the sample size $N$ is sufficiently large. 
Especially, $\kappa$ ranging from 4 to 11 is optimal 
from the perspective of the higher rate 
of correctly estimating the true number of periodicities for 
all different lengths of observations.

Next, we fix the hyperparameter $\kappa$ in the proposed criterion as $\kappa=5$,
and take different basis functions $\nu_i$ with different number $p$ of basis functions into the numerical simulation for comparison.
 Specifically, we use three types of basis functions: B-spline basis, Fourier basis, and Wavelet (Haar wevelet) basis.
 The number of basis functions are $p=1$, 5, 10, 15, 20, 25, 30.
 Table \ref{tab:result_other_basis} summarizes the non-zero counts of the estimated number $\hat{r}$ by Algorithm \ref{al1} 
 based on 100 simulations for each basis type and number $p$ of basis functions.
 The length of observation is fixed as $N=960$ in each case.

In view of Table \ref{tab:result_other_basis}, the number of correct detections increases as $p$ increases, 
which aligns well with the theoretical understandings that the larger the dimensions $p$ is, 
the better the performance of approximation to a function is.
Also, the numerical results suggest that the choice of any orthonormal basis $\nu_i$ and its dimension $p$ does not substantially affect the performance of the estimator $\hat{r}$ by Algorithm \ref{al1}. 

\begin{table}[H]
\centering
\caption{
Estimated number $\hat{r}$ of periodicities for $100$ simulations
using B-spline basis functions, Fourier basis functions, and Wavelet basis functions, respectively ($p=1,5,10,15,20,25,30$).
}
\label{tab:result_other_basis}
\begin{tabular}{cccccccc}
\hline
       & \multicolumn{3}{c}{B-spline basis}      & \multicolumn{2}{c}{Fourier basis} & \multicolumn{2}{c}{Wavelet basis} \\ \cline{2-8} 
       & $\hat{r}=0$ & $\hat{r}=3$ & $\hat{r}=4$ & $\hat{r}=3$     & $\hat{r}=4$     & $\hat{r}=3$     & $\hat{r}=4$     \\ \hline
$p=1$  & 5           & 94          & 1           & 98              & 2               & 98              & 2               \\
$p=5$  & 0           & 97          & 3           & 97              & 3               & 98              & 2               \\
$p=10$ & 0           & 97          & 3           & 97              & 3               & 97              & 3               \\
$p=15$ & 0           & 98          & 2           & 97              & 3               & 98              & 2               \\
$p=20$ & 0           & 98          & 2           & 97              & 3               & 99              & 1               \\
$p=25$ & 0           & 98          & 2           & 98              & 2               & 100             & 0               \\
$p=30$ & 0           & 98          & 2           & 98              & 2               & 100             & 0               \\ \hline
\end{tabular}
\end{table}

\begin{figure}[h]
\centering
\includegraphics[width=\linewidth, height=0.8\textheight, keepaspectratio]{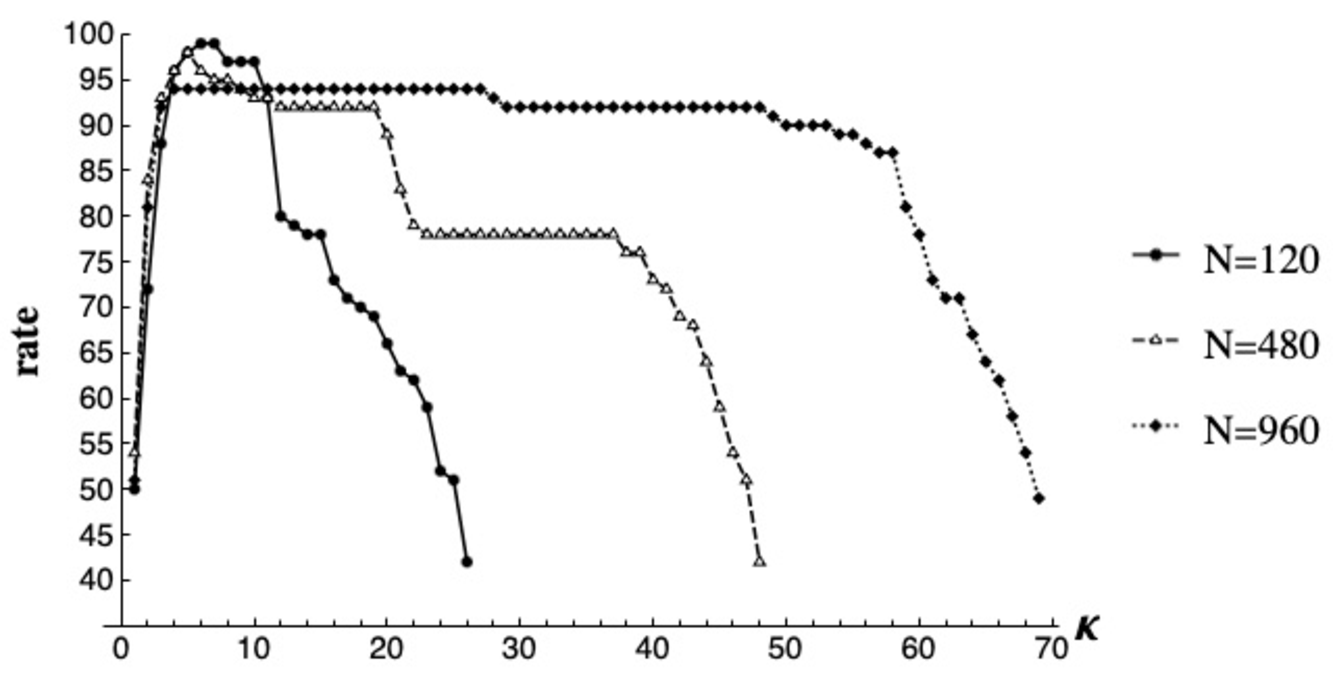}
\caption{
The rate when the periodicity is correctly estimated across all simulations for each $\kappa$ within the ``stable" range when $N=120$, $480$, and $960$, respectively.}
\label{fig:performance_hatr3}
\end{figure}

\begin{rem}
In simulations, $H$ is fixed as $H=8$.
We find that the choice of $H$ has little impact on the estimation of the number of periodicities.
In practice, the selected autoregressive order $h$ is typically small,
indicating that the results are not sensitive to the upper bound $H$.
\end{rem}

The other simulation results such as  the comparison of the ``stable "range of $\kappa$ between the true model different from \eqref{true_model} and its corresponding  local alternative model, and the comparison with AIC-type criterion,  can be found in the Supplementary Material. 
\begin{rem}
When the coefficients of other periodic components in the local alternative model are set to $20/\sqrt{N}$ or even smaller,  our method 
by Algorithm \ref{al1}
still correctly estimates the number of periodicities.
However, as a limitation of the approach,
the estimated number of true periodicities
tends to be a small value
when the coefficient parameters of functional trigonometric regression models
are small.
\end{rem}

\section{Data Analysis}
\label{sec5}
In this section, we apply our information criterion to data analyses in practice.
To be specific, 
our proposed algorithm is applied to both temperature and sunspot data to demonstrate its practical utility.
We fix the hyperparameter $\kappa$ as $\kappa = 5$,
as it lies near the lower end of the stable range and performs consistently well across all sample sizes in simulation.
Additionally, we set the maximum order of $h$ as $H=8$ and  $r\leq 10$.

\subsection{Sunspot data}
First,  we apply the proposed criterion to identify periodicities within the sunspot dataset spanning $140$ years.
The $140$-year  sunspot data comprises daily data for the entire sun from January 1, 1876, to December 31, 2015.
Missing values are handled by imputing them by the preceding data point.
We suppose $m$ consecutive data consists of a functional observation. 
The values of $m$ are specified as
$m=15$, $30$, $91$, $182$, i.e.,
corresponding to the time spans of
half month, $1$ month, $3$ months, and $6$ months, respectively.
In view of divisibility, $364$ days are supposed to be one year
when $m=91, 182$; accordingly, we have $561$ and $280$ observations 
of functional time series for each $m$.
Similarly, $360$ days are regarded as one year
when $m=15$, $30$; accordingly,
we have $3360$ and $1680$ observations of functional time series
for each $m$.

The estimated periodicity of the sunspot data is reported in Table \ref{sunspotresult} by applying Algorithm \ref{al1}.

\begin{table}[H]
\centering
\caption{
Estimates of periodicities based on the sunspot data of 140 years.
The numbers of the observations for the functional time series are $3360$, $1680$, $561$, $280$ 
for $m = 15, 30, 91, 182$, respectively.
Estimated frequencies $\hat{\theta}_1$ for different $m$ are also shown in parentheses.
}
\label{sunspotresult}
\begin{tabular}{lclcl}
\hline
      & \multicolumn{4}{c}{1st periodicity}                        \\ \cline{2-5} 
      & \multicolumn{2}{c}{[year-cycle]} & \multicolumn{2}{c}{$\hat{\theta}_1$} \\ \hline
$m=15$  & \multicolumn{2}{c}{11}         & \multicolumn{2}{c}{$\left(0.0243 \right)$}     \\
$m=30$  & \multicolumn{2}{c}{11}         & \multicolumn{2}{c}{$\left(0.0486 \right)$ }     \\
$m=91$  & \multicolumn{2}{c}{11}         & \multicolumn{2}{c}{$\left(0.146 \right)$ }     \\
$m=182$ & \multicolumn{2}{c}{11}         & \multicolumn{2}{c}{$\left(0.292 \right)$ }     \\ \hline
\end{tabular}
\end{table}

We convert the estimated frequency into the periodicity of time units. 
Each estimated frequency corresponds to approximately $11$-year-cycle.
It is well-documented that the sunspots exhibit an approximately $11$-year-cycle (e.g.~\cite{Schwabe1844}).
Table \ref{sunspotresult} demonstrates that the proposed criterion and algorithm have effectively identified the periodicity of sunspots.
Nevertheless, when using the datasets with the shorter period such as $90$ years or $60$ years, adjusting $m =15$, $30$, $91$, $182$ results in the absence of the detection of the periodicities in most cases.
The cause is attributed to the variations in the pattern of each functional time series (cf. Fig. \ref{fig_fts}).
Consequently, it is understood that longer period of the data makes periodicity detection more feasible, particularly when there are differences in the patterns of  functional time series.

\subsection{Temperature data}
Let us move to the second data analysis. 
By using the proposed criterion, we investigate the possibility of detecting periodicities in daily average temperature data from three countries along the Pacific Rim.
Along the Pacific Rim, it is known that the El Ni\~{n}o and La Ni\~{n}a phenomena occur 
with a cycle of $2$ to $7$ years, causing notable influences on temperature patterns.
Hence, we seek to verify whether it is feasible to identify the cycles of the El Ni\~{n}o and La Ni\~{n}a phenomena from daily temperature data in the countries along the Pacific Rim.

In this data analysis, we utilize the daily average temperature spanning a period of $30$ years from January 1,1990, to December 31, 2020, obtained from three countries along the Pacific Rim: Kyoto, Japan; Sydney, Australia; and Vancouver, Canada.
The average temperatures from three countries
are divided into small segments of $m= 30$, $273$ days,
corresponding to approximately $1$ month and $9$ months, respectively.

The segment length $m$ determines the time scale of the resulting functional time series. 
This may affect the set of detectable periodicities.
In this analysis, we consider two representative choices, $m=30$ and $m=273$, corresponding to different time scales.
The choice of $m=30$ corresponds to approximately one month and is used to capture seasonal variation, 
particularly the annual cycle. 
In contrast, $m=273$ represents a relatively long segment length which is not aligned with the annual cycle.

To elaborate, let us focus on the case $m=30$.
In other words, we divide the daily average temperature data for 30 years into approximately monthly intervals.
For divisibility, one year is preprocessed  into $360$ days.
This preprocessing allows us to obtain $360$ observations of the functional time series for each functional data set from the three countries.
When applying Algorithm $\ref{al1}$ to the observed functional time series, the estimated periodicities are shown in Table $\ref{table2}$.

\begin{table}[H]
\centering
\caption{
Estimates of periodicities for daily temperature data of a function with $m = 30$, i.e., 
360 observations from functional time series per country.
Estimated frequencies $\hat{\theta}_1$, $\hat{\theta}_2$, and $\hat{\theta}_3$ are also  shown in parentheses.
}
\label{table2}
\begin{tabular}{lcccc}
\hline
                                 & \multicolumn{1}{l}{} & \multicolumn{1}{l}{Japan} & \multicolumn{1}{l}{Australia} & \multicolumn{1}{l}{Canada} \\ \hline
\multirow{2}{*}{1st periodicity} & {[}year-cycle{]}     & 1                         & 1                             & 1                          \\
                                 & $\hat{\theta}_1$                    & (0.524)                         & (0.524)                              & (0.524)                           \\
\multirow{2}{*}{2nd periodicity} & {[}month-cycle{]}    & 6                         & 6                             & 6                          \\
                                 & $\hat{\theta}_2$                    & (1.05)                         & (1.05)                             & (5.24)                          \\
\multirow{2}{*}{3rd periodicity} & {[}month-cycle{]}    & 4                         & -                             & -                          \\
                                 & $\hat{\theta}_3$                    & (1.57)                         & -                             & -                          \\ \hline
\end{tabular}
\end{table}

We convert the estimated frequencies into the cycle of time units.
The 1 year-cycle, identified as the most significant periodicity among the temperature data from the three countries,
represents yearly variation.
The second most common periodicity across the three countries is a 6-month cycle, reflecting half-year variation.
The third estimated frequency, observed only in Japan, corresponds to a 4-month cycle, reflecting Japan's specific climate characteristics.

Next, we consider the case $m=273$ with 
a length of 40 observed functional time series
The results 
are shown in Table $\ref{table3}$.

\begin{table}[H]
\centering
\caption{Estimated periodicity for daily temperature data of 40 observed functions of $m=273$  per country.
Estimated frequency $\hat{\theta}_1$ is shown in parentheses.
}
\label{table3}
\begin{tabular}{rcccc}
\hline
\multicolumn{1}{l}{}             &                  & Japan & Australia & Canada \\ \hline
\multirow{2}{*}{1st periodicity} & {[}year-cycle{]} & 3     & 3         & 3      \\
                                 &       $\hat{\theta}_1$           & (1.57)     & (1.57)         & (1.57)      \\ \hline
\end{tabular}
\end{table}

We also convert this estimated periodicity into the cycle of time units. 
A $3$-year-cycle is obtained as the first estimated periodicity, revealing the El Ni\~{n}o and La Ni\~{n}a phenomena.
It is well-known
that the temperatures of the countries along the Pacific Rim 
are influenced by these phenomena.

Consequently, the proposed Algorithm \ref{al1} 
has successfully detected cycles in both sunspot activities 
and the El Ni\~{n}o and La Ni\~{n}a phenomena.
The above results are consistent with findings reported in the existing literature.
The complete results are provided in the Supplementary Material.

\section{Conclusion}
\label{sec:6}
We have proposed a new information criterion \eqref{criterion} for detecting the number of the periodicities for functional time series.
Algorithm \ref{al1} based on the new information criterion
allows for the detection of cycles by utilizing the first principal component of 
multivariate time series.
We theoretically established the consistency of
the estimates for coefficients and frequency parameters in a functional trigonometric model.
In addition, the estimated number of periodicities based on Algorithm \ref{al1} is also consistent.
Simulation studies demonstrate that  the selected model based on the new criterion is insensitive to the penalty parameter $\kappa$.
In data analyses, we identified an $11$-year-cycle in the sunspot data; 
and detected a 3-year-cycle in the daily average temperature data,
which corresponds to the well-known El Ni\~{n}o and La Ni\~{n}a phenomena.

\section*{Acknowledgements}
R.~Sagawa was supported by JST SPRING, Grant Number JPMJSP2128, Waseda Research Institute for Science and Engineering, Grant-in-Aid for Young Scientists (Early Bird), and  JEES-Mitsubishi Corporation Science and Technology Scholarship for Students, Scholarship Number  MITSUSCI2508.
Y.~Liu was supported by JSPS Grant-in-Aid for Scientific Research (C) 23K11018.
V.~Patilea acknowledges the support of the French Agence Nationale de la Recherche (ANR) 
under reference ANR-24-CE40-2439 (FUNMathStat  project).
We also appreciated the FY2024 Grant Program for Promotion of International Joint Research from Waseda University.

\section*{Supplementary Material}
\label{SM}
The proofs of technical results have been reported in the Supplementary  Material.
Also, additional simulation studies, such as 
the comparison of the ``stable" range of $\kappa$ between the true model and its corresponding local alternative model, and the comparison with AIC-type criterion, are provided.
The complete results of the data analyses can also be found in the Supplementary  Material.



\newpage

\appendix
\section{Proof of Theorem \ref{criterion_thm}}
\label{Appendix_sec5}

\begin{proof}
We assume that the frequencies $\theta_k$, $k=1,\ldots,r_{0}$, 
are ordered so that $\rho_{1} > \rho_{2} >  \cdots > \rho_{r_{0}}$,
where $\rho_{k}=\sqrt{\tilde{\alpha}_{k}^2+\tilde{\beta}_{k}^2}$,
$\tilde{\alpha}_{k}$ and $\tilde{\beta}_{k}$ are the components in the vector $\tilde{\bpsi}$ of \eqref{linear_transformation_model}.
The strict ordering is imposed only for simplifying notation.
We remark that, the least squares estimates $\hat{\tilde{\alpha}}_{r}$ and $\hat{\tilde{\beta}}_{r}$,
and the frequency estimates $\hat{\theta}_{r}$ as in Section \ref{sec2},  satisfy
\begin{align*}
 \hat{\tilde{\alpha}}_{r} &= \tilde{\alpha}_{r} + O_{p}\left(\log \log N/N\right)^{1/2}, \\
 \hat{\tilde{\beta}}_{r} &= \tilde{\beta}_{r} + O_{p}\left(\log \log N/N\right)^{1/2}, \\
 \hat{\theta}_r &= \theta_{r} + O_{p}((\log \log N)/N^{3})^{1/2},
\end{align*}
in view of proofs of Lemmas \ref{para_as_converge} and \ref{theta_estimate}.
For the sake of clarity, we define $\hat{Y}_t(r)$ as 
\[
\hat{Y}_t(r)=
\begin{cases}
\sum_{k=r+1}^{r_{0}}\{\tilde{\alpha}_{k}\cos (t\theta_k)+\tilde{\beta}_{k}\sin (t\theta_k)\}+\tilde{X}_t&r< r_{0},\\
\tilde{X}_t&r \geq r_{0}.
\end{cases}
\]
In fact, the residual $\hat{X}_t(r)$, in view of the proof of Lemma \ref{para_as_converge},  is
\begin{align*}
\hat{X}_t(r) &= \tilde{Y}_t-\hat{\bm{q}}_t(r)^{\T}\hat{\tilde{\bpsi}}(r) \\
&=
\hat{Y}_t(r)+ O_{p}\left(\frac{\log \log N}{N}\right)^{1/2}.
\end{align*}
This leads to the approximation 
\[
\frac{1}{N}\sum_{t=1}^N\hat{X}_t(r)\hat{X}_{t-l}(r)=\frac{1}{N}\sum_{t=1}^N\hat{Y}_t(r)\hat{Y}_{t-l}(r)+ O_{p}\left(\frac{\log \log N}{N}\right).
\]
To compare the approximate errors $\hat{\sigma}_{r-1}^2(h)$ and $\hat{\sigma}_{r}^2(h)$, 
applying Lemma $\ref{error_evaluate}$ yields
\begin{align*}
\hat{\sigma}_{r-1}^2(h)&=\hat{\sigma}^2(h)+\sum_{k=r}^{r_{0}}\frac{|\eta_h(e^{\mi \theta_k})|^24\pi f_{\tilde{X}}(\theta_k)}{h}+o(h^{-1}),\\
\hat{\sigma}_{r}^2(h)&=\hat{\sigma}^2(h)+\sum_{k=r+1}^{r_{0}}\frac{|\eta_h(e^{\mi \theta_k})|^24\pi f_{\tilde{X}}(\theta_k)}{h}+o(h^{-1}).\\
\end{align*}
Thus, we have
\[
\hat{\sigma}_{r-1}^2(h)-\hat{\sigma}_{r}^2(h)=\frac{|\eta_h(e^{\mi \theta_r})|^24\pi f_{\tilde{X}}(\theta_r)}{h}+o(h^{-1}),
\]
and equivalently,
\begin{align*}
\frac{\hat{\sigma}_{r-1}^2(h)}{\hat{\sigma}_{r}^2(h)}=1+\frac{|\eta_h(e^{\mi \theta_r})|^24\pi f_{\tilde{X}}(\theta_r)}{h\hat{\sigma}_{r}^2(h)}+o(h^{-1}).
\end{align*}
Taking the logarithm on both sides, we obtain
\begin{align*}
\log \frac{\hat{\sigma}_{r-1}^2(h)}{\hat{\sigma}_{r}^2(h)}&=\log \left\{1+\frac{|\eta_h(e^{\mi \theta_r})|^24\pi f_{\tilde{X}}(\theta_r)}{h\hat{\sigma}_{r}^2(h)}+o(h^{-1})\right\}\\
&=\frac{|\eta_h(e^{\mi \theta_r})|^24\pi f_{\tilde{X}}(\theta_r)}{h\hat{\sigma}_{r}^2(h)}+o(h^{-1}),
\end{align*}
and hence,
\[
\log \{\hat{\sigma}_{r-1}^2(h)\}=\log \{\hat{\sigma}_{r}^2(h)\}+\frac{|\eta_h(e^{\mi \theta_r})|^24\pi f_{\tilde{X}}(\theta_r)}{h\hat{\sigma}_{r}^2(h)}+o(h^{-1}).
\]
For the above discussion, 
\begin{align*}
\varphi(r-1,h)&=\log \{\hat{\sigma}_{r-1}^2(h)\}+\{\kappa(r-1)+h\}\frac{\log N}{N}\\
&=\log \{\hat{\sigma}_{r}^2(h)\}+\frac{|\eta_h(e^{\mi \theta_r})|^24\pi f_{\tilde{X}}(\theta_r)}{h\hat{\sigma}_r^2(h)}+\{\kappa(r-1)+h\}\frac{\log N}{N}+o(h^{-1})\\
&=\log \{\hat{\sigma}_{r}^2(h)\}+\{\kappa r+h\}\frac{\log N}{N}+\left\{\frac{|\eta_h(e^{\mi \theta_r})|^24\pi f_{\tilde{X}}(\theta_r)}{h\hat{\sigma}_r^2(h)}-\kappa\frac{\log N}{N}\right\}+o(h^{-1})\\
&>\log \{\hat{\sigma}_{r}^2(h)\}+\{\kappa r+h\}\frac{\log N}{N}\\
&=\varphi(r,h).
\end{align*}
Consequently, for $r\leq r_{0}$, we obtain
\[
\varphi(r-1,\hat{h}_{r-1})>\varphi(r,\hat{h}_r).
\]

Next, we consider $r> r_{0}$.
In view of the proof of Lemmas \ref{para_as_converge} and \ref{theta_estimate} again, we have
\begin{align*}
 \hat{\tilde{\alpha}}_{r} &= \tilde{\alpha}_{r} + o_{p}\left(\log \log N/N\right)^{1/2}, \\
 \hat{\tilde{\beta}}_{r} &= \tilde{\beta}_{r} + o_{p}\left(\log \log N/N\right)^{1/2}, \\
 \hat{\theta}_r &= \theta_{r} + o_{p}((\log \log N)/N^{3})^{1/2}.
\end{align*}
Similarly, noting that $\hat{Y}_t(r) = \tilde{X}_t$, we have
\[
\frac{1}{N}\sum_{t=1}^N\hat{X}_t(r)\hat{X}_{t-l}(r)=\frac{1}{N}\sum_{t=1}^N\tilde{X}_t \tilde{X}_{t-l} +o_{p}\left(\frac{\log \log N}{N}\right),
\]
with the difference as
\[
\hat{\sigma}^2_r(h)=\hat{\sigma}^2_{r-1}(h)+o_{p}\left(h\frac{\log \log N}{N}\right).
\]
Therefore, we obtain
\begin{align*}
\varphi(r,h)&=\log \{\hat{\sigma}^2_r(h)\}+(\kappa r+h)\frac{\log N}{N}\\
&=\log \{\hat{\sigma}^2_{r-1}(h)\}+(\kappa r+h)\frac{\log N}{N}+o_{p}\left(h\frac{\log \log N}{N}\right)\\
&>\log \{\hat{\sigma}^2_{r-1}(h)\}+(\kappa r+h)\frac{\log N}{N}-\kappa\frac{\log N}{N}+o_{p}\left(h\frac{\log \log N}{N}\right)\\
&=\varphi(r-1,h)
\end{align*}
and $\varphi(r,h)$ is an increasing function of $r$ for $r>r_{0}$.
Thus, for $r>r_{0}$, we obtain
\[
\varphi(r,\hat{h}_r)>\varphi(r-1,\hat{h}_{r-1}).
\]

In summary,
\[
\begin{cases}
\varphi(r-1,\hat{h}_{r-1})>\varphi(r,\hat{h}_r)&r\leq r_{0},\\
\varphi(r,\hat{h}_r)>\varphi(r-1,\hat{h}_{r-1})&r > r_{0}.
\end{cases}
\]
Consequently, 
$\hat{r}$, minimizing $\varphi(r,\hat{h}_r)$, is consistent with the true order $r_{0}$.  
\end{proof}


\end{document}